\newcommand{\parh}[1]{\noindent\textbf{#1}}
\newcommand{\F}{Fig.}
\newcommand{\T}{Table}
\renewcommand{\S}{Sec.}
\newcommand{\tool}{\textsc{CipherGuard}\xspace}
\newcommand{\ztool}{\textsc{Zebrafix}\xspace}
\newcommand{\otool}{\textsc{Obelix}\xspace}
\newcommand{\ftool}{\textsc{CipherFix}\xspace}
\newcommand{\htool}{\textsc{CipherH}\xspace}
\newtheorem{theorem}{Theorem}[section]
\newtheorem{lemma}[theorem]{Lemma}
\begin{document}

\title{CipherGuard: Compiler-aided Mitigation against Ciphertext Side-channel Attacks}

\author{Ke Jiang}
\affiliation{
  \institution{Southern University of Science and Technology}
  \city{}
  \country{}}
\email{ke006@e.ntu.edu.sg}

\author{Sen Deng}
\affiliation{
  \institution{Hong Kong University of Science and Technology}
  \city{}
  \country{}}
\email{sdengan@cse.ust.hk}

\author{Yinshuai Li}
\affiliation{
  \institution{Southern University of Science and Technology}
  \city{}
  \country{}}
\email{12231139@mail.sustech.edu.cn}

\author{Shuai Wang}
\affiliation{
  \institution{Hong Kong University of Science and Technology}
  \city{}
  \country{}}
\email{shuaiw@cse.ust.hk}

\author{Tianwei Zhang}
\affiliation{
  \institution{Nanyang Technological University}
  \city{}
  \country{}}
\email{tianwei.zhang@ntu.edu.sg}

\author{Yinqian Zhang}
\affiliation{
  \institution{Southern University of Science and Technology}
  \city{}
  \country{}}
\email{yinqianz@acm.org}

\begin{abstract}
Recently, the new ciphertext side channels resulting from the deterministic memory encryption in Trusted Execution Environments (TEEs), enable ciphertexts to manifest identifiable patterns when being sequentially written to the same memory address.
Attackers with read access to encrypted memory in TEEs can potentially deduce plaintexts by analyzing these changing ciphertext patterns. 
In this paper, we design \tool, a compiler-based mitigation tool to counteract ciphertext side channels with high efficiency and security guarantees.
\tool\ is based on the LLVM ecosystem, and encompasses multiple defense strategies, including software-assisted probabilistic encryption, secret-aware register allocation, and diversion-based obfuscation.
The design of \tool\ demonstrates that compiler techniques are highly effective for fine-grained control over mitigation code generation and assisted component management.
Through a comprehensive evaluation, it demonstrates that \tool\ can strengthen the security of various cryptographic implementations more efficiently than existing state-of-the-art defense, i.e., \ftool.
In its most efficient strategy, \tool\ incurs an average performance overhead of only 1.41$\times$, with a maximum of 1.95$\times$.
\end{abstract}


\maketitle
\pagestyle{plain}

\section{Introduction}
\label{sec:introduction}

Data security in cloud computing has become a major concern, hindering data owners from fully leveraging Virtual Machines (VMs) offered by public service providers.
To address this, major processor vendors introduce a hardware-based technology known as TEE, such as AMD SEV~\cite{kaplan2016amd}, SEV-ES~\cite{kaplan2017protecting} and SEV-SNP~\cite{kaplan2020amd}, Intel SGX~\cite{intelsgx, johnson2021supporting} and TDX~\cite{inteltdx} and ARM CCA~\cite{armcca}, which provides an isolated environment with memory encryption to fortify the integrity and confidentiality of VMs against privileged attackers, such as malicious hypervisors or host OS.
However, recent works disclosed a new vulnerability in SEV-SNP, namely ciphertext side-channel attack~\cite{li2021cipherleaks, li2022systematic}, where deterministic memory encryption enables attackers to construct a one-to-one mapping between plaintext and ciphertext by continuously monitoring changes in encrypted guest memory, ultimately recovering the plaintext.
More seriously, this attack is not limited to SEV-SNP but can threaten any TEE processors utilizing deterministic memory encryption with the memory bus snooping technique~\cite{lee2020off}.

Several studies have aimed at addressing this threat~\cite{deng2023cipherh, dfsan, wichelmann2023cipherfix}.
For example, \htool~\cite{deng2023cipherh} detects ciphertext side-channel vulnerabilities in popular cryptography libraries.
This detection work tracks sensitive memory writes by DFSan~\cite{dfsan} and conducts symbolic execution with constraint solving to identify potential vulnerabilities.
\ftool~\cite{wichelmann2023cipherfix} mitigates such leakage through static binary instrumentation, transforming each sensitive memory write into a copy with masking operations in a new section.
More recently, \otool~\cite{wichelmann2024obelix} leverages ORAM to protect code blocks and small data regions, while applying interleaving (padding) for larger data regions to ensure fresh ciphertexts.

Despite these efforts, significant shortcomings remain. 
Automated mitigation approaches such as \ftool\ and \otool\ incur \textbf{significant performance overheads}, making them impractical for real-world deployment. 
\ftool\ incurs overhead from 2.4$\times$ to 16.8$\times$ even in its most efficient variant, while \otool\ suffers from the inherently high overhead of ORAM, causing slowdowns of several hundred times.
On the other hand, detection-based tools like \htool\ leave developers with hundreds of alerts to \textbf{triage and patch manually}, a process that is error-prone, time-consuming, and offers no guarantee that fixes will not introduce new vulnerabilities. 
These limitations underscore the inadequacy of existing solutions and the urgent need for a practical, efficient, and developer-friendly mitigation framework.

Intuitively, integrating mitigation code during the compilation process enables more efficient defenses.
The most critical advantage of compiler-based repair lies in its global program visibility, which enables fine-grained resource management and the automated embedding of security strategies into mitigation code while preserving performance.
Building on this insight, we propose \tool, a compiler-based mitigation tool that automatically detects and protects sensitive memory writes during compilation to counteract ciphertext side channels. 
\tool\ performs in-place code insertion, preserving the original control flow and minimizing disruption to the code structure.
With compiler support, \tool\ flexibly manages resources such as registers and memory layouts, thereby optimizing the performance of mitigation code.
More importantly, \tool mitigates vulnerabilities across all relevant branches by leveraging the global perspective of the compiler, including propagation of data flow and control flow, preventing residual issues that could result from localized repairs.
In terms of security guarantees, \tool\ employs multi-strategy defenses: (1) software-assisted probabilistic encryption that randomizes secrets with nonces upon memory writes by masking, (2) secret-aware register allocation that retains secrets in registers to reduce memory exposure, and (3) diversion-based obfuscation that inserts decoy values into secret locations. 
By implementing these strategies, \tool\ achieves stronger security guarantees against ciphertext side channels.
Therefore, these advantages position \tool\ as an effective solution for achieving both high performance and strong security in ciphertext side-channel mitigation.

When implementing proposed multiple strategies, \tool\ is equipped with streamlined random nonce management, flexible register allocation, precise handling of variable lengths, reduced memory operations through obfuscation, and numerous compiler-driven resource optimizations, thereby achieving a satisfactory performance overhead.
In its most efficient strategy, \tool\ incurs an average performance overhead of only 1.41$\times$, with a maximum of 1.95$\times$.
This is a significant performance improvement over \ftool, highlighting the efficacy of the compiler-based methodology in defeating ciphertext side channels.
In sum, this paper makes the following contributions:

\begin{itemize}
\item It proposes multi-strategy defenses with enhanced security guarantees into a single compiler-based mitigation process, applying each strategy to its most suitable object.

\item It exhibits performance efficiency by leveraging the compiler’s inherent capabilities to preserve control flow, manage resources, and benefit from optimized nonce handling, flexible register allocation, and reduced memory operations when applying multi-strategy defenses.

\item It implements \tool, an LLVM-based mitigation tool for defending against ciphertext side channels. Evaluation on real-world cryptographic software set demonstrates that \tool\ secures all sensitive memory writes with superior performance compared to \ftool.
\end{itemize}


\section{Background and Related Work}
\label{sec:background}

\subsection{Ciphertext Side-Channel Attacks}
\label{subsec:ciphertext}

The ciphertext side channel originates from the deterministic memory encryption implemented in AMD's TEE.
The encrypted memory is calculated by an XOR-Encrypt-XOR (XEX) mode, expressed as: $c = ENC(m \oplus T(P_{m})) \oplus T(P_{m})$, where the plaintext $m$ undergoes the XOR operations before and after AES-128 encryption with a tweak value $T(P_{m})$ that incorporates the physical address $P_{m}$.
Without freshness, encryption of the same plaintext at a given physical address produces the identical ciphertext.
Note that this vulnerability extends to other deterministic encryption-based TEE architectures as long as attackers have read accesses to ciphertext (via software access~\cite{li2021cipherleaks} or memory bus snooping~\cite{lee2020off}).
Two attack schemes are introduced in work~\cite{li2022systematic}.

\begin{figure}[htbp]
\centering
\includegraphics[width=0.95\linewidth]{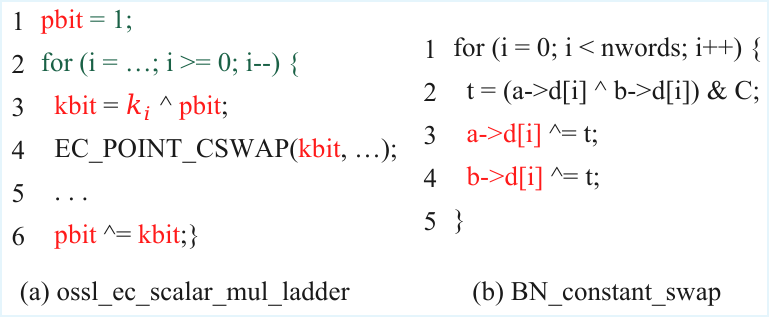}
\caption{Ciphertext side-channel examples.}
\label{fig:channel}
\end{figure}


\parh{The \textit{Dictionary} attack} involves continuous monitoring the ciphertext at a fixed address to construct a dictionary containing mappings of ciphertext-plaintext pairs.
Consider the code snippet shown in \F~\ref{fig:channel}(a), extracted from the ECDSA Montgomery ladder algorithm implemented in OpenSSL-3.0.2.
In each loop iteration, the \texttt{BN\_is\_bit\_set} function (denoted by $k_{i}$ in line 3) is utilized to obtain one bit of the secret $k$.
Following this, the $kbit$ variable is computed through an XOR operation with the value in $pbit$, which is then written back to $pbit$.
Given the dual XOR operations in lines 3 and 6, $pbit$ ultimately stores each bit of the secret $k$.
The attacker records consecutive ciphertext pairs ($pbit$-$kbit$) both before and after the \texttt{BN\_is\_bit\_set} function, aiming to deduce $k_{i}$ in each iteration based on the changes observed in ciphertext pairs.


\parh{The \textit{Collision} attack}, in contrast, focuses on identifying repetitions or alterations in certain ciphertexts to break the constant-time mechanism.
\F~\ref{fig:channel}(b) shows the constant-time-swap function \texttt{BN\_constant\_swap}.
This function takes two variables $a$ and $b$, along with a decision $C$ (e.g., $kbit$ in line 4 of \F~\ref{fig:channel}(a)).
If $C$ is set to 1, the values of $a$ and $b$ are exchanged, leading to observable changes in the ciphertext. Conversely, if $C$ is 0, the ciphertext remains unaltered.
In this way, the \textit{Collision} attack recovers the decision $C$, undermining the constant-time component.

Currently, many well-known cryptographic applications are vulnerable to this attack, including RSA and ECDSA (like \textit{secp256k1} and \textit{secp384r1}) equipped with constant-time algorithms, ECDSA from WolfSSL-5.3.0, ECDSA and RSA from MbedTLS-3.1.0, and EdDSA (\textit{Ed25519}) from OpenSSH adopted by Ubuntu LTS 20.04~\cite{li2021cipherleaks, li2022systematic}.

\subsection{Existing Countermeasures}
\label{subsec:countermeasures}

Hardware-based countermeasures provide stronger security by eliminating ciphertext side channels, but they require extensive validation before chip manufacturing. In contrast, we choose a software-based approach, enabling quicker implementation and deployment without modifying hardware.
Unfortunately, existing countermeasures for cache and timing side channels~\cite{percival2005cache, osvik2006cache, zhang2012cross, yarom2014flush, liu2015last, yarom2014recovering, ryan2019return, aranha2020ladderleak}, like constant-time cryptography, cannot mitigate ciphertext side channels. 
While constant-time cryptography avoids secret-dependent branches and memory accesses, it has been shown to be ineffective against ciphertext side-channel attacks~\cite{li2021cipherleaks, li2022systematic, deng2023cipherh}.


Without detailed implementation, AMD's whitepaper~\cite{amdmeasures} and Li et al.~\cite{li2022systematic} proposed countermeasures as follows.

\begin{itemize}
\item[1)] Preserving secret variables in registers instead of memory enhances security~\cite{li2022systematic}, but faces implementation challenges due to limited general purpose registers.

\item[2)] Avoiding the reuse of fixed memory addresses ensures fresh ciphertexts~\cite{li2022systematic, amdmeasures}, but requires extra memory and precise runtime reference management.

\item[3)] Introducing a random nonce to the plaintext with each memory write increases ciphertext unpredictability~\cite{li2022systematic}. This includes masking and padding strategies~\cite{amdmeasures}.
\end{itemize}

Translating these conceptual insights into a practical mitigation implementation is far from trivial, as it requires overcoming significant challenges to preserve both functionality and security. 
This reveals a critical research gap, which our work addresses by developing a principled, compiler-based methodology for mitigating ciphertext side channels. 
The difficulty underscores both the theoretical novelty and the practical value of our contribution.

\section{Overview}
\label{sec:overview}

\subsection{Threat Model}
\label{subsec:threat}

We adopt the same threat model as prior works on ciphertext side channels~\cite{li2021cipherleaks,li2022systematic}, where a privileged adversary aims to steal secrets from cryptographic programs running in a confidential VM (e.g., AMD SEV-SNP). 
The adversary has full system privileges and can read the ciphertext of the encrypted memory~\cite{li2022systematic}. 
We consider a powerful adversary capable of single-step attacks~\cite{wilke2023sev}: by controlling processes within a confidential VM and pausing after each instruction, the adversary can precisely identify optimal observation points in the control flow. 
We mainly focus on the protection of cryptographic software against ciphertext side-channel attacks, while assuming the OS kernel is patched to prevent register leakage during context switch~\cite{li2022systematic}.
For the recent relocation-based attacks~\cite{yany:sec2025:relocate_vote, heraclesCPA2025}, we assume that these attacks have been fixed, such that the hypervisor’s ability to move or swap guest pages is disabled.

\begin{figure}[htbp]
\centering
\includegraphics[width=0.95\linewidth]{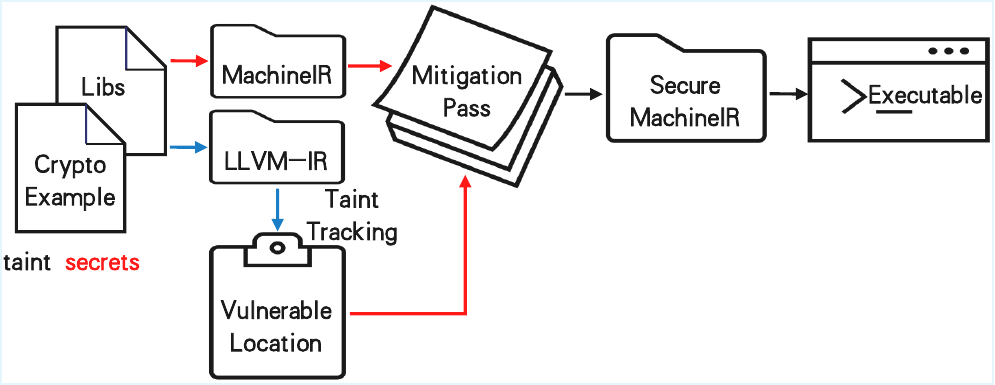}
\caption{Workflow of \tool.}
\label{fig:workflow}
\end{figure}

\subsection{Design Choices of \tool}
\label{subsec:workflow}

In designing \tool, we take the following considerations into account.

\noindent \textbf{Workflow.}
\tool is an LLVM-based methodology that mitigates ciphertext side channels \textit{at the compilation stage}.
\F~\ref{fig:workflow} illustrates the workflow of \tool, which consists of two phases: dynamic taint analysis and static rewriting.
First, \tool\ captures all sensitive memory writes and their precise memory references at IR level through dynamic taint analysis (the blue arrow pointing to the taint tracking component).
By seamlessly aligning the taint information with the Machine-IR level, \tool\ enables the application of multiple mitigation strategies during the \textit{recompilation} of target programs, as indicated by the red arrow pointing to the process section.
The modified Machine-IR is then compiled into a secure executable that can run directly within existing TEEs.

\noindent \textbf{Application Scope.}
The main audience for \tool\ is developers looking to deploy cryptographic software on modern TEEs. 
We highlight that \tool\ can mitigate standalone cryptographic software, meaning that a third-party library can be compiled into a complete cryptographic application.
Furthermore, both the target cryptographic software and the third-party libraries are taken from their C code.

\noindent \textbf{Non–Source-level Fixes.}
We refrain from applying ciphertext side-channel mitigation at the source-code level for two main reasons. 
First, ciphertext side channels are inherently tied to memory writes, and such sensitive operations can only be precisely identified in the compiler backend. 
Second, source-level fixes are often invalidated by compiler optimizations.
For instance, the register spilling-reloading mechanism may introduce additional memory writes in order to conserve register resources. 
Therefore, \tool\ implements mitigation in the compiler backend rather than at the source level.

\begin{figure*}[ht]
\centering
\includegraphics[width=0.95\textwidth]{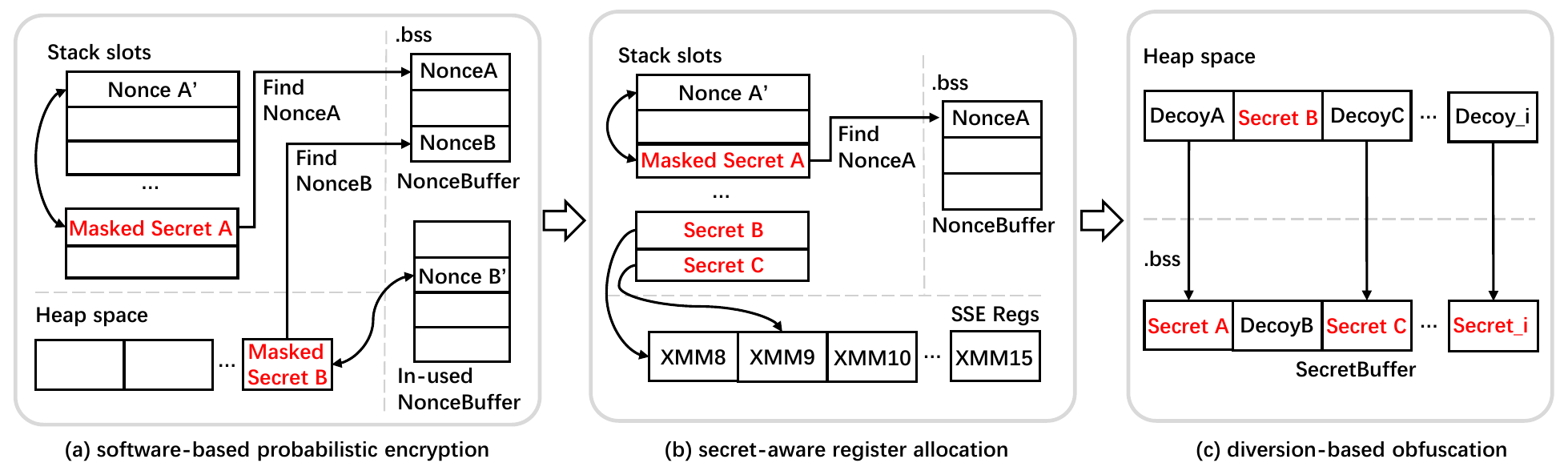}
\caption{Three mitigation strategies proposed in \tool. In (a), \texttt{rdrand} is used to pre-generate nonce buffer, while omitting the AES-NI and XorShift128+ variants for generating a random nonce on-the-fly.}
\label{fig:variants}
\end{figure*}

\noindent \textbf{Dynamic Taint Analysis.}
We adopt dynamic taint analysis primarily for its scalability and practical effectiveness. 
Unlike static approaches, which current research has shown to be limited to analyzing only small program fragments~\cite{wang2019identifying, brotzman2019casym}, dynamic taint analysis can handle large and complex software systems. 
Moreover, prior work on side-channel analysis, including \htool, \ftool, and cache-based studies~\cite{wang2017cached, jiang2022cache}, demonstrates that program execution paths often exhibit strong similarity. 
This observation suggests that dynamic taint analysis can effectively capture the majority of secret-dependent flows, making it a more suitable choice.

\noindent \textbf{IR and Machine-IR.}
Tainting during compilation is commonly used to detect program vulnerabilities. 
Tools like DFSan have been employed in prior work~\cite{borrello2021constantine, deng2023cipherh} to track secret-related IR instructions. 
Performing taint analysis on the final optimized IR eliminates the influence of front-end compiler optimizations. 
Further, mitigation at the Machine-IR level is necessary because backend optimizations, such as register allocation, can inadvertently compromise ciphertext side-channel defenses. 
To ensure effectiveness, mitigation must be applied at the LLVM Machine-IR level within register allocation pass, aligning taint information from the IR with the Machine-IR. 
Although IR instructions are replaced by target instructions, their semantics are largely preserved through instruction selection and emission.

After register allocation is completed, the instrumentation in the Machine-IR becomes stable and can be safely lowered to the binary, ensuring that the inserted mitigation code is not disrupted by subsequent compilation passes. 
This stability arises because, all virtual registers have been mapped to physical registers, and most remaining passes are limited to instruction lowering and scheduling rather than transformations that could remove or reorder the instrumentation.

\subsection{Mitigation Strategies and Challenges}
\label{subsec:challenges}

\tool\ proposes three mitigation strategies to resist ciphertext side-channel vulnerabilities, as illustrated in~\F~\ref{fig:variants}.

\parh{Strategy 1: Software-assisted Probabilistic Encryption} employs hardware instructions to generate nonces for masking sensitive memory stores, as illustrated in~\F~\ref{fig:variants}(a).
This strategy includes three variants: 1) using the \texttt{rdrand} instruction, optimized with a pre-generated nonce buffer; 2) leveraging the \texttt{vaesenc} instruction from AES-NI; and 3) applying a shift/rotate-based linear transformation derived from XorShift128+~\cite{vigna2017further}. 
Two random nonce buffers are essential to ensure secure masking of sensitive data: one buffer supplies the initial nonces used to mask sensitive variables upon their first store, while the other maintains the active nonces currently associated with each variable throughout their lifecycle. 

This design introduces several design and implementation \textbf{challenges}. 
First, determining suitable memory regions to securely allocate these nonce buffers requires careful consideration to avoid leakage or unauthorized access. 
Managing the lifecycle of these nonces, such as reusing or invalidating them appropriately, must be tightly coupled with variable usage to prevent residual leakage or incorrect decryption. 
Second, establishing robust and efficient references between sensitive variables and their corresponding nonces is non-trivial. 
These references must support quick lookups and ensure integrity even under aggressive code transformations. 

\parh{Strategy 2: Secret-aware Register Allocation} refines Strategy 1 by aiming to protect secrets stored in sensitive stack areas through retention within registers throughout their lifetimes (see~\F~\ref{fig:variants}(b)). 
This approach seeks to eliminate sensitive memory writes, \textit{fundamentally mitigating ciphertext side-channel leakage} and thereby achieving stronger security guarantees.

A critical \textbf{challenge} lies in devising an effective algorithm for assigning a limited set of registers to tainted stack slots. 
Additionally, the algorithm must prioritize tainted stack slots for register assignment, while ensuring that the allocation does not excessively interfere with the original program’s register usage or degrade runtime efficiency. 
This involves analyzing variable lifetimes, usage frequency, and interference patterns to maximize register reuse without compromising the protection of secrets. 
Besides, because conventional compiler register allocation is primarily optimized for performance, it must be restructured to meet the new goal of preventing sensitive data from ever touching memory. 

\parh{Strategy 3: Diversion-based Obfuscation} enhances the performance of Strategy 2 by replacing masking operations for sensitive heap data. 
Instead, it disrupts observable ciphertext patterns by inserting decoy values into secret heap space, misleading the attacker and causing incorrect inferences about the actual secret values (see~\F~\ref{fig:variants}(c)).
To achieve this, Strategy 3 diversifies portions of the sensitive heap into a separately allocated buffer, populating the original memory layout with strategically placed decoy values.

This strategy introduces a critical \textbf{challenge}:
determining how to design diverse obfuscation strategies and how frequently these strategies should be applied is crucial. 
This requires a careful balance between performance and security: overly frequent obfuscation may incur unnecessary overhead, while insufficient obfuscation could leave exploitable patterns for attackers.
Moreover, the storage of secret values must be handled securely and efficiently, typically in dedicated buffers that are not reused by regular program data. 
These design considerations are crucial for maintaining the integrity of the obfuscation while minimizing performance penalties.

\section{Detailed Design of \tool}
\label{sec:design}

\subsection{Tainting Secret Locations}
\label{subsec:tainting}

Mitigating only truly vulnerable points would minimize the performance impact on protected programs. 
However, accurately detecting ciphertext side-channel leaks is difficult, with tools like \htool\ still generating false positives. Hence, we opt to taint all sensitive memory accesses. 
Besides, we adopt the same strategy as \ftool\ to log execution traces multiple times for each cryptography program with varied secrets and inputs. 
In addition to the direct usage of secrets, it is necessary to consider data derived from the secrets as ``sensitive'' as well.
A secret may appear in control-flow branches as a condition or in memory accesses as the index. Then variables guarded by a tainted condition are tainted as secret-related; variables assigned through memory access are tainted once the index of the buffer is tainted.

\tool\ provides extensive support for commonly used x86 instructions, covering data movement, arithmetic, logic operations, comparisons involving sensitive memory, updates to EFLAGS, zero/flag extensions, as well as frequently used SSE/AVX vector instructions.  
This range is sufficient to taint secret locations in most cryptographic applications. 

\subsection{Software-assisted Probabilistic Encryption}
\label{subsec:datamasking}

In software-assisted probabilistic encryption, the key is to introduce a random nonce into secret variables before they are written back into the memory, thereby achieving unpredictable ciphertexts. 

\begin{figure}[htbp]
    \centering
    \footnotesize
    \begin{tabular}{|ll|}
        \hline
        1:\ \ \textcolor{red}{load\ \ \ \ \ \ \ \ $MEM_{key}$\ \ \ \ $REG_{key}$} &//\ sensitive\ load\\
        2:\ \ load\ \ \ \ \ \ \ \ $MEM_{mask}$\ \ $REG_{mask}$ &//\ load\ mask\\
        3:\ \ save\ \ \ \ \ \ \ \ EFLAGS &\\
        4:\ \ xor\ \ \ \ \ \ \ \ \ $REG_{mask}$\ \ \ \ \textcolor{red}{$REG_{key}$} &\\
        5:\ \ restore\ \ \ \ \ EFLAGS &\\
        6:\ \ operate\ \ \ \ \textcolor{red}{$REG_{key}$} &\\
        7:\ \ save\ \ \ \ \ \ \ \ EFLAGS &\\
        8:\ \ update\ \ \ \ \ $REG_{mask}$ &//\ generate\ new\ mask\\
        9:\ \ xor\ \ \ \ \ \ \ \ \ $REG_{mask}$\ \ \ \ \textcolor{red}{$REG_{key}$} &\\
        10: restore\ \ \ \ \ EFLAGS &\\
        11: store\ \ \ \ \ \ \ $REG_{mask}$\ \ \ \ $MEM_{mask}$ &//\ store\ mask\\
        12: \textcolor{red}{store\ \ \ \ \ \ \ $REG_{key}$\ \ \ \ \ \ $MEM_{key}$} &//\ sensitive\ store \\
        \hline
    \end{tabular}
    \caption{In-place code insertion.}
    \label{fig:maskcode}
\end{figure}

\parh{Mitigation Code.}
\F~\ref{fig:maskcode} illustrates the general scenario of inserting mitigation code, where $MEM_{key}$ represents the memory cell of a sensitive memory access instruction.
The steps are: 1) loading the masked plaintext and corresponding random nonce (lines 1--2); 2) XORing the nonce with the masked plaintext to obtain true plaintext (lines 3--5); 3) performing calculations on the plaintext (line 6); 4) updating the random nonce and XORing it with the plaintext to obtain a new masked plaintext (lines 7--10); and 5) storing both the new masked plaintext and the new random nonce (lines 11--12).
It is worth noting that the \texttt{EFLAGS} register is protected using \texttt{lahf} and \texttt{sahf}, which temporarily save and restore the flags to ensure a consistent processor state during mitigation.

We determine the memory cells holding the random nonces for stack and heap areas with different methods.
In the stack, since the code is inserted during the register allocation phase in the LLVM backend, we can freely allocate a stack slot for the nonce. 
For example, a new stack slot, $MEM_{mask}$, is created to store the random nonce (line 11 of \F~\ref{fig:maskcode}) and is associated with the source memory $MEM_{key}$ (line 1). When loading the nonce, a load instruction is inserted to reference $MEM_{mask}$.
For heap memory, instead of intercepting memory allocation calls like \texttt{malloc}, \texttt{realloc}, \texttt{calloc}, and \texttt{free}, which would introduce significant overhead, we implement a more efficient hash-based mechanism. 
This scheme leverages the runtime heap address of the source memory $MEM_{key}$ to compute the index where the corresponding nonce is stored in the \texttt{.bss} section (see \S~\ref{subsec:buffermanage}). 

\parh{Random Nonce Generation.}
To enhance security, the random nonce is updated for each sensitive store instruction. 
In Strategy 1, a 1K random nonce buffer is pre-generated using \texttt{rdrand} during the cryptography program's initialization, stored in the \texttt{.bss} section. 
For unmasked stack areas with secrets, Variant \texttt{rdrand} selects a random nonce from this buffer as the initial nonce and increments it by three when storing new secrets at the same location. 
Other variants, instead, generate a random nonce in real-time using AES-NI or XorShift128+ schemes.
AES-NI requires a single instruction and two 128-bit registers such as \texttt{xmm14} and \texttt{xmm15}, while XorShift128+ needs 11 instructions and three 128-bit registers, from \texttt{xmm13} to \texttt{xmm15}.

\subsection{Secret-aware Register Allocation}
\label{subsec:registeralloc}

\parh{Feasibility Analysis.}
Building upon Strategy 1, the masking scheme for the stack area can be further enhanced through register allocation.
However, implementing this scheme is challenging due to limited register resources. To evaluate, we conducted a heuristic investigation. 
First, we identified the maximum number of stack slots involved in sensitive memory accesses among various cryptography programs from \T~\ref{tab:SSEimpact}, with libsodium's EdDSA implementation having the highest (583 slots). 
Accommodating this many vector registers in SIMD is difficult, prompting the exploration of a secret-aware register allocation approach. 
This involves tracking register liveness and timely deallocating frequently used stack slots in tainted functions. 
Next, we assessed the number of vector registers required. 
By analyzing disassembled cryptography programs, we found that the LLVM backend typically allocates the first 8 vector registers (\texttt{xmm0} - \texttt{xmm7}) for optimizing data movement, so we heuristically preserved the last 8 SSE registers (\texttt{xmm8} - \texttt{xmm15}) for sensitive stack slots. 
This approach resulted in an average performance impact of 7\%, ranging from 1\% in SHA512 (libsodium) to 20\% in AES (mbedTLS), indicating a viable and practical solution.

\begin{table}[htbp]
\centering
\caption{The maximum numbers of sensitive stack slots among tainted functions.}
\label{tab:SSEimpact}
\scriptsize
\begin{tabular}{ccccc}
\hline
    \multirow{2}{*}{\textbf{Implementation}}
    &\multirow{2}{*}{\textbf{Stack Slots}}
    &\multirow{2}{*}{}
    &\multicolumn{2}{c}{\textbf{Impact on performance}}\\
    \cline{4-5}
    & && \textbf{Cycles} & \textbf{Factor} \\
\hline
    libsodium-EdDSA &583 && 198331	&1.04 \\
    libsodium-SHA512 &27 && 43043	&1.01 \\
\hline
    mbedTLS-AES &7 && 571968	&1.20 \\
    mbedTLS-Base64 &25 && 11575	&1.03 \\
    mbedTLS-ChaCha20 &4 && 609942	&1.02 \\
    mbedTLS-ECDH &52 && 4586425	&1.08 \\
    mbedTLS-ECDSA &52 && 4095496	&1.04 \\
    mbedTLS-RSA &52 && 1943880	&1.03 \\
\hline
    OpenSSL-ECDH &157 && 1171024	&1.19 \\
    OpenSSL-ECDSA &79 && 18180467	&1.08 \\
\hline
    WolfSSL-AES &43 && 689316	&1.08 \\
    WolfSSL-ChaCha20 &5 && 512287	&1.06 \\
    WolfSSL-ECDH &100 && 362967	&1.04 \\
    WolfSSL-ECDSA &30 && 4559759	&1.08 \\
    WolfSSL-EdDSA &159 && 426239	&1.02 \\
    WolfSSL-RSA &28 && 543835	&1.11 \\
\hline
    Average &- && -	&1.07 \\
\hline
\end{tabular}
\end{table}

\parh{Register Allocation.}
When analyzing tainted functions in cryptography programs, we prioritize frequently accessed stack slots. 
These stack slots are more susceptible to ciphertext side-channel attacks due to their sequential overwrites. Moreover, preserving these slots in registers minimizes the overhead of masking operations. 
To manage this, we create two structures in the LLVM backend: \texttt{StackUsage}, which maps all sensitive stack slots to their respective MBB locations in the tainted functions, and \texttt{StackOpt}, which selects mappings from \texttt{StackUsage} based on the number of MBB locations for each stack slot until the available vector register capacity is reached. 
These structures enable efficient secret-aware register allocation, focusing on optimizing register use and reducing masking overhead for frequently accessed stack slots.

\begin{itemize}
\item \textit{Allocation:} When encountering a sensitive stack store where its slot is mapped in \texttt{StackOpt}, Strategy 2 assigns an available vector register to hold the variable. 
Simultaneously, the corresponding MBB location in \texttt{StackOpt} is removed to track the liveness of the allocated vector register.

\item \textit{Deallocation:} Once all MBB locations for a sensitive stack slot in \texttt{StackOpt} are removed, the associated vector register is freed, marking the end of its liveness.
Subsequently, Strategy 2 selects another stack slot from \texttt{StackUsage} based on the number of MBB locations and allocates available vector registers to the remaining stack slots in \texttt{StackOpt}.
\end{itemize}

\begin{lstlisting}[basicstyle=\scriptsize, frame=single, caption=Sensitive stack slots contained in MBBs., label=reglist, escapeinside=``]
42: entry, if.end19, if.end24, if.end30...
38: entry, if.end24, if.end30...
39: entry, while.body, if.end30...
52: if.end30...
49: if.end30...
46: while.body, if.then10, if.end12, if.then18,
    if.end19...
43: while.body, if.end19, if.then22, if.then28,
    if.end30...
40: entry, while.cond...
50: if.end30...
44: while.body, if.end19, if.end24, if.end30...
\end{lstlisting}

\begin{figure}[htbp]
\centering
\includegraphics[width=0.95\linewidth]{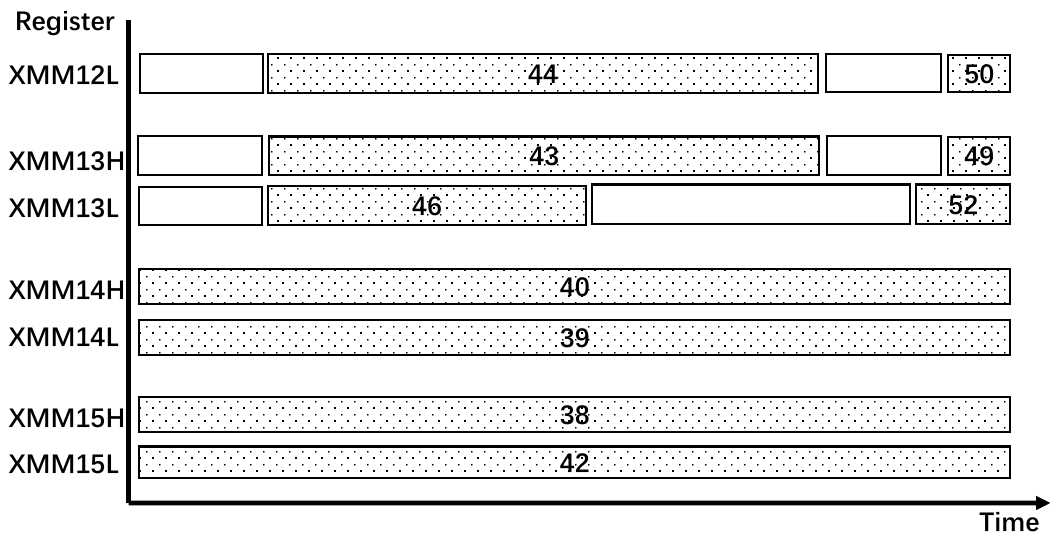}
\caption{An example of register allocation from the function \texttt{bn\_mul\_add\_words} of OpenSSL-ECDSA. The white and shaded blocks represent the liveness of stacks, with shaded blocks containing numbers that denote registers holding the sensitive stack slots.}
\label{fig:liveness}
\end{figure}

We illustrate the register allocation process for sensitive stack slots in the function \texttt{bn\_mul\_add\_words} from OpenSSL-ECDSA. Listing~\ref{reglist} provides a sorted \texttt{StackOpt} structure, showing sensitive stack slots and their positions in MBBs.
The process is simulated up to the \texttt{if.end30} MBB and subsequent MBBs are omitted for brevity. 
As shown in \F~\ref{fig:liveness}, 16 stack slots, each 8 bytes in size, are allocated across vector registers \texttt{xmm8} to \texttt{xmm15}, with \texttt{H} and \texttt{L} representing the high and low 64 bits of each register. 
Registers such as \texttt{xmm13L}, \texttt{xmm13H}, and \texttt{xmm12L} are recycled and reallocated to stack slots 52, 49, and 50, respectively, demonstrating efficient register usage through liveness tracking.

\subsection{Diversion-based Obfuscation}
\label{subsec:obfuscation}

Based on Strategy 2, we further improve heap mitigation efficiency by introducing a diversion-based obfuscation approach. 
This strategy is inspired by cache side-channel defenses, which execute both decoy and true paths while committing only the true results~\cite{crane2015thwarting,rane2015raccoon}. 
Overall, instead of applying masking to every sensitive heap unit, we obfuscate secret values by inserting a decoy value (varying nonce) into the original heap position and diverting the true secret into a secure buffer located in the \texttt{.bss} section or vice versa.
An easy and effective way to index the true secret from the secure buffer is to follow the indexing scheme used for nonce management for heap space in Strategy 1.

In this strategy, the attacker is assumed to have enhanced capabilities, including the ability to observe changes in the ciphertext of the secure buffer located in the \texttt{.bss} section.
Therefore, to defend against such an attacker, Strategy 3 ensures an obfuscation process that deliberately confuses and misleads the attacker’s observations, preventing accurate reasoning based on ciphertext changes:

\begin{itemize}
\item \textbf{Obfuscation 1}: Ciphertext changes at both the original secret location (inserting a decoy value, like a varying nonce) and the secure buffer (diverting the true secret).

\item \textbf{Obfuscation 2}: Ciphertext Changes at only one location are deliberately designed to confuse the attacker, making it unclear which position actually holds the true secret.
\end{itemize}

\begin{figure}[htbp]
\centering
\includegraphics[width=0.95\linewidth]{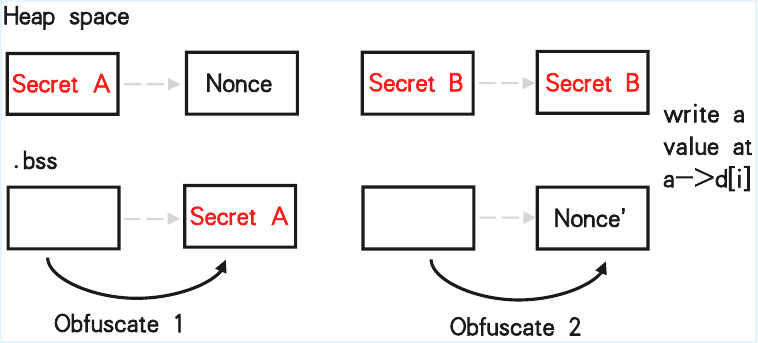}
\caption{The observation from an attacker when performing multiple obfuscation at the same a->d[i].}
\label{fig:obfuscation}
\end{figure}

An example where ``Secret A" and ``Secret B" are stored in heap space and undergoes constant-time swap, similar to the process shown in \F~\ref{fig:channel}(b), is used to demonstrate the obfuscation process described above, as illustrated in \F~\ref{fig:obfuscation}.
An instance of \textbf{Obfuscation 1} occurs on ``Secret A" when a distinct nonce is inserted into the original heap location, causing the ciphertext at the original secret position to change accordingly.
At the same time, ``Secret A" is moved into the secure buffer, resulting both ciphertexts change at original secret position and the secret buffer.
For the ``Secret B", an instance of \textbf{Obfuscation 2} occurs by only updating the nonce to the secret buffer.
From the attacker's perspective, it remains unclear whether the true secret resides in the unchanged ciphertext within the original heap location or the modified ciphertext at the secret buffer.

The implementation of Strategy 3 pursues obfuscating the entire sensitive heap space by performing the two obfuscation strategies evenly.
By deterministically selecting a subset, such as using even-odd indexing, Strategy 3 achieves low performance overhead while still fulfilling the requirements of the obfuscation mechanism.
The insertion of varying decoy values follows the approach of Variant \texttt{rdrand} from Strategy 1, where simply incrementing the nonce each time is sufficient to produce a distinguishable ciphertext.

\subsection{Managing Nonce Buffers}
\label{subsec:buffermanage}

In \tool, all masking-related mitigation manages two types of nonces: the \textit{initial nonce} and the \textit{currently used nonce}.

\parh{Random Nonce Buffer.}
As introduced in \S~\ref{subsec:datamasking}, \tool pre-generates a buffer in the \texttt{.bss} section with 1K random nonces using \texttt{rdrand} during initialization. 
For unmasked stack or heap areas, \tool\ calculates an index based on the memory address (\textit{addr}) to select the corresponding random nonce from the buffer. 
The index is computed as $index = addr\ \&\ 0x3FF$, and the random nonce is retrieved from the \texttt{.bss} using $randomArray(, index, 8)$, where $randomArray$ is the starting address of the buffer.

\parh{Currently Used Nonces.}
Two tactics are adopted to store random nonces for sensitive data in the stack and heap areas, enabling their subsequent decoding.
For the stack, the compiler automatically allocates slots for nonces and associates them with the corresponding memory, accessed by applying an offset to the \texttt{rsp} register.
This method ensures the nonces are isolated from other stack locations by using the \texttt{rbp} register. 
Additionally, the nonce buffer, used to store active nonces, is initialized to zero and set in the entry block of each function.
Unlike \ftool, which risks overlap by storing nonces outside the runtime stack area, our approach avoids potential malfunctions.

For the heap, instead of applying a constant offset for the nonce buffer like in \ftool, we place the nonce buffer in the \texttt{.bss} section as well, pre-allocating sufficient space in advance. 
Mapping 64-bit addresses to the random nonce buffer requires a balance between avoiding sparsity and ensuring no collisions. 
To achieve this, we use Fipolach hashing. Since the last 20 bits of the address vary, we compute the multiplier as $2^{20}$$\times$\textit{golden-ratio}, approximately 648,056, and carefully select a shift value of 22 to prevent collisions. 
Therefore, a heuristic hash function maps heap addresses to buffer indices for retrieving nonces: $((addr\ \&\ 0xFFFFF) * 648056) \gg 22$.
In this setting, 128K consecutive 8-byte entries are mapped independently into the 1MB nonce buffer without collisions since all variants of \tool\ align masking operations to 8-byte addresses. 
However, the maximum index generated by the hash function is 162,012, which exceeds 128K. To accommodate this, a 256K-entry buffer is required, resulting in a 2MB nonce buffer with around 50\% utilization in our configuration.
Through experimentation, we found that a nonce buffer space of 2MB is sufficient for cryptographic programs without collisions.


To handle potential hash collisions in nonce buffers for heap areas, such as in server environments, we propose dynamically expanding the buffer by allocating multiple groups of entries for each index (10 groups per index, requiring a 20MB nonce buffer). 
If two different heap addresses generate the same index, the first 8 bytes are used to match the heap address, allowing us to locate the corresponding nonce in the subsequent 8 bytes. 
We evaluated this method by simulating collisions across all 10$\times$128K entries. The results showed only a slight increase of time to initialize the expanded nonce buffer and handle each collision ($\approx 8 ms$), indicating minimal performance impact.

\subsection{Correctness of Transformation}
\label{design:correctness}

We refine the semantic equivalence proof by making the simulation argument explicit for each mitigation macro.

\parh{Operational Semantics.}
Let $\mathit{P}$ and $\mathit{P}'$ denote the state spaces of the original and transformed programs, respectively.
A state $\mathit{P}'$ in the transformed semantics is defined to include temporary variables introduced by mitigation steps:
\[
  \sigma' = (pc', R', M', A', \Delta),
\] 
where $pc$ is the program counter, $R : \mathit{Reg} \to \mathit{Word}$ is the register file, $M : \mathit{Addr} \to \mathit{Word}$ is the memory store, $A$ denotes auxiliary runtime structures (e.g., call stack, heap metadata), and $\Delta$ is a finite auxiliary environment recording temporary registers, nonce updates in progress, and dummy writes that have not yet been logically committed.

We use the small-step relation $\sigma \xrightarrow{e} \sigma'$ to denote one micro-step of the transformed execution, possibly corresponding to a single original instruction or part of a macro expansion.

\parh{Abstraction Function.}
We define the abstraction function $\alpha: \mathit{P}' \to \mathit{P}$ to erase all mitigation artifacts:
\[
  \alpha(pc', R', M', A', \Delta) = (pc', \widehat{R}, \widehat{M}, \widehat{A}),
\]
where $\widehat{R}(r) = R'(r) \oplus \mathit{Nonce}'_o$ if $r$ holds masked value of object $o$, and $\widehat{M}(a) = M'(a) \oplus \mathit{Nonce}'_o$ if $a$ holds masked value for $o$.
All entries in $\Delta$ (nonce temporary buffers, dummy buffer slots, and volatile intermediates) are discarded.
This abstraction ensures that every mitigated state corresponds to a well-formed original state, preserving all functional contents.

\begin{lemma}[Atomicity of Encode–Update]
Let $\sigma \xrightarrow{*} \sigma'_1$ represent a finite micro-trace corresponding to one logical masked store (masking plaintext and updating nonce). If the macro is correctly emitted (single entry, single exit, no early branch), then $\alpha(\sigma) \xrightarrow{store(v)} \alpha(\sigma'_1)$ in the semantics of $\mathit{P}$. In particular, nonce update and ciphertext write are atomic with respect to the abstract semantics:
\[
\widehat{M}'_1(a) = v, and\ \alpha(\mathit{Nonce}'_1) = f(\alpha(\mathit{{Nonce}})),
\]
where $f$ represents the nonce update function, so that decoding after the step yields the same logical value as $\mathit{P}$.
\end{lemma}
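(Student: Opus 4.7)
The plan is to establish the lemma by a forward simulation argument whose granularity matches the individual micro-steps of the Encode--Update macro shown in Fig.~\ref{fig:maskcode}. Concretely, I would treat the ten-to-twelve instruction sequence as a small deterministic automaton and exhibit, for each intermediate configuration $\sigma_i$, an invariant strong enough that the abstraction $\alpha$ collapses the whole trace $\sigma \xrightarrow{*} \sigma'_1$ to a single abstract step $\alpha(\sigma) \xrightarrow{store(v)} \alpha(\sigma'_1)$ in $P$.

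First I would label each micro-step of the macro: load masked plaintext (1), load current nonce (2), save EFLAGS (3), unmask into $R'_{\mathit{key}}$ (4--5), perform the semantic operation producing $v$ (6), save EFLAGS (7), update the nonce via $f$ into $R'_{\mathit{mask}}$ (8), remask $v$ (9), restore EFLAGS (10), commit the fresh nonce to $M'_{\mathit{mask}}$ (11), and commit the new ciphertext to $M'_{\mathit{key}}$ (12). I would then state a per-step invariant: at $\sigma_i$, the environment $\Delta$ records which of $R'$ and $M'$ currently hold masked versus cleartext bits, together with the pre-update nonce and the saved flag bits. Under this invariant, $\alpha$ is well-defined on every $\sigma_i$ because it consults $\Delta$ before XOR-unmasking. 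A short case analysis then shows that at micro-steps (1)--(11) the projected abstract state is still $\alpha(\sigma)$ (nothing has logically committed yet), while micro-step (12) flips $\widehat{M}'_1(a)$ to $v$ and simultaneously installs $\alpha(\mathit{Nonce}'_1) = f(\alpha(\mathit{Nonce}))$ because the XOR in (9) and the commit pair (11)--(12) compose to the identity on the clear value.

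The atomicity conclusion then follows from the macro's emission discipline. Because the instrumentation is single-entry and single-exit and inserted after register allocation in a stable pass (\S~\ref{subsec:workflow}), no branch or call can transfer control into the middle of the sequence, and \tool's observer model in the abstract semantics of $P$ only samples states at macro boundaries. Hence the only externally visible abstract transition is the boundary pair $\alpha(\sigma) \to \alpha(\sigma'_1)$, which by the invariant above coincides with the single-step $store(v)$ in $P$'s semantics.

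The part I expect to be the main obstacle is the \emph{inconsistency window} between micro-steps (8) and (11): during this window $R'_{\mathit{mask}}$ already carries the new nonce while $M'_{\mathit{mask}}$ still contains the old one, and $R'_{\mathit{key}}$ holds an intermediate value that matches neither the old nor the new mask. Making $\alpha$ total here requires the invariant to carry enough auxiliary information in $\Delta$ (the tag of which nonce is ``authoritative'' for each object, plus the pre-update nonce as a ghost value) so that $\widehat{M}$ is computed against the consistent pair, never a mismatched one. A subtler but related issue is EFLAGS: the mask XORs destructively update condition flags, so I would discharge a small side lemma showing that the \texttt{lahf}/\texttt{sahf} bracket restores $\alpha(R')|_{\mathrm{EFLAGS}} = \alpha(R)|_{\mathrm{EFLAGS}}$ before control leaves the macro, which is what downstream flag-dependent instructions in $P$ rely on. Once these two points are in place, the remaining obligations (determinism of $f$, disjointness of $\mathit{Nonce}$ storage from program data via the buffer management of \S~\ref{subsec:buffermanage}) are routine.
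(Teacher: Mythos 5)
Your proposal is correct and follows essentially the same route as the paper's proof: both treat the macro as a straight-line, single-entry/single-exit sequence with no secret-dependent control flow and no aliasing on intermediates, and collapse it to a single abstract $store(v)$ transition in $\alpha$'s image. Your version is considerably more careful than the paper's three-sentence argument, which silently glosses over the inconsistency window between the nonce update and the two memory commits and the EFLAGS save/restore bracket --- the two obligations you correctly single out as the ones that actually need discharging via the auxiliary environment $\Delta$.
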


\begin{proof}
Each masked store in $\mathit{P}'$ expands to a constant sequence of side-effect–free micro-steps that (i) update $\mathit{Nonce}$ by $f(\mathit{Nonce})$, (ii) compute $v \oplus \mathit{Nonce}'_1$ into a temporary register, and (iii) write to the memory location of this store.
Because the sequence has no secret-dependent control flow and no shared aliasing on intermediate locations, we can compose it as one atomic transition in $\alpha$’s image.
\end{proof}

\begin{lemma}[Register Integrity]
Let $r$ be a register holding a secret-typed SSA value.
In $\mathit{P}'$, all $r \in \mathit{Reg}_{sec}$ are allocated from a non-spillable class (xmm8–xmm15).
For any call or control transfer step $\sigma'_i \xrightarrow{e} \sigma'_{i+1}$, where $e$ does not involve secret declassification, 
\[
\widehat{R}'_{i+1}(r) = R'_i(r),
\]
and no spill of secret value occurs in memory between $\sigma'_i$ and $\sigma'_{i+1}$.
\end{lemma}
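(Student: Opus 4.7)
My plan is to prove the lemma by establishing a simple invariant on the state $\sigma' = (pc', R', M', A', \Delta)$: for every $r \in \mathit{Reg}_{sec}$ that currently holds a live secret-typed SSA value, (i) $r$ lies in the reserved class $\{\mathtt{xmm8}, \ldots, \mathtt{xmm15}\}$ and (ii) no micro-step between $\sigma'_i$ and $\sigma'_{i+1}$ emits a $\mathtt{mov}$, $\mathtt{vmov*}$, or $\mathtt{push}$ whose destination is a memory address derived from $R'_i(r)$'s backing storage. The invariant is preserved by the structural constraints that Strategy~2 imposes on the backend, so the conclusion $\widehat{R}'_{i+1}(r) = R'_i(r)$ follows by case analysis on the step relation $\xrightarrow{e}$.

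First I would fix the set of transitions that can affect $r$'s contents between two consecutive observable steps: (a) pure arithmetic or logical micro-steps that neither read nor write memory, (b) branches and unconditional jumps that only update $pc'$, and (c) call or return transitions that interact with the ABI. Cases (a) and (b) are immediate, since the operational semantics leave $R'$ unchanged on the register component for $r \notin \mathrm{written}(e)$, and $\alpha$ is the identity on registers whose nonce binding has not changed. The real content is in case (c): here I would invoke the construction of \texttt{StackUsage} and \texttt{StackOpt} from Section~\ref{subsec:registeralloc}, which marks the reserved registers as members of a non-spillable class that the modified register allocator never chooses as spill candidates and never clobbers across a call boundary inside a tainted function. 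I would also appeal to Lemma~1 (Atomicity of Encode--Update) to exclude the possibility that a masking macro transiently relocates a secret into a spillable slot, since by that lemma the macro commits atomically in $\alpha$'s image.

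For the ``no spill'' conclusion, I would proceed by contradiction: suppose some micro-step $\sigma'_j \xrightarrow{e} \sigma'_{j+1}$ with $i \le j < i+1$ writes the current live value of $r$ to a memory address $a \in \mathit{Addr}$. Then the emitting pass would have inserted a spill of a physical register in the reserved class, contradicting its construction (the allocator restricts candidates to $\{\mathtt{xmm0}, \ldots, \mathtt{xmm7}\}$ and the general-purpose registers it already uses for non-secret values). The step cannot originate from the mitigation macros themselves either: by Lemma~1, each macro only writes to $MEM_{mask}$ slots and to the original $MEM_{key}$ under a masked value, neither of which contains a raw secret, so the resulting $M'$ satisfies $\widehat{M}'(a) = M'(a) \oplus \mathit{Nonce}'$ rather than the plaintext.

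The main obstacle I expect is handling the function-call boundary cleanly, because the System~V AMD64 ABI treats \texttt{xmm8}--\texttt{xmm15} as caller-saved; a naive reader would expect the backend to insert spill/reload pairs around every call, which would falsify the lemma. To discharge this, I need to formalize the side condition ``$e$ does not involve secret declassification'' so that it implicitly restricts $e$ to call edges that remain inside the set of tainted functions, which Strategy~2 compiles with a modified calling convention that preserves the reserved class. For calls that escape the tainted region, I would argue that by hypothesis they constitute declassification events and are therefore excluded from the quantifier. Pinning down this boundary condition carefully, and checking that \emph{every} inserted instruction of the mitigation macros actually respects it, is where the bookkeeping is most delicate.
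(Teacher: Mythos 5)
Your proposal is correct and rests on the same foundation as the paper's proof---the reserved, non-spillable register class \texttt{xmm8}--\texttt{xmm15} and the allocator's emission discipline---but you develop it considerably further than the paper does. The paper's argument is two sentences: register allocation emits static annotations mapping all secret-typed temporaries to the reserved set, the class is non-spillable, and the save/restore paths are ``type-checked,'' hence the abstract state is invariant under call transitions. You instead carry out an explicit case analysis on the step relation, use Lemma~1 to rule out transient relocation by the masking macros, and argue the no-spill claim by contradiction against the allocator's candidate set. The most valuable addition is your treatment of the call boundary: you correctly observe that under the System~V AMD64 ABI the \texttt{xmm} registers are caller-saved, so a naive backend would insert spill/reload pairs around calls and falsify the lemma; you resolve this by reading the ``no secret declassification'' side condition as restricting to call edges that stay within the tainted region, where Strategy~2's modified allocation preserves the reserved class. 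The paper sidesteps this entirely with the phrase ``excluded from callee-saved or scratch registers,'' which is imprecise for \texttt{xmm} registers. Your version makes explicit an obligation the paper leaves implicit, and is the more defensible argument; the paper's version buys brevity at the cost of hiding exactly the boundary condition you identify as delicate.
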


\begin{proof}
Register allocation emits static proof annotations guaranteeing that all secret-typed temporaries are mapped to the reserved register set and excluded from callee-saved or scratch registers.
Since the register class is non-spillable and all save/restore code paths are type-checked, the corresponding abstract state is invariant under call transitions.
\end{proof}

\begin{lemma}[Dummy Erasure]
Let $M'$ be the memory after an obfuscation–diversion macro expansion.
Then for any logical address $a$ in the program memory,
\[
\widehat{M}'(a) = M(a)\ if\ a\ is\ a\ true\ target, and\ \widehat{M}'(a)\ is\ ignored\ otherwise.
\]
Hence dummy writes to the original secret location or secure buffers are erased under $\alpha$ and have no functional effect.
\end{lemma}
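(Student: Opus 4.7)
The plan is to prove the statement by induction on the micro-step expansion of the obfuscation–diversion macro, using the auxiliary environment $\Delta$ of the transformed state to track which physical location currently carries the true value of each logical address. First I would introduce a \emph{relocation map} $\rho: \mathit{Addr} \to \mathit{Addr}$ that records, for every logical address $a$, the physical slot in $M'$ holding its semantic value --- either the original heap cell or the entry reserved for $a$ in the secure buffer in \texttt{.bss}. The abstraction $\alpha$ would then be refined so that $\widehat{M}'(a) = M'(\rho(a)) \oplus \mathit{Nonce}'_{a}$ for every $a$ in the program memory, with every physical slot not equal to $\rho(a)$ for some logical address classified as \emph{ignored}.

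Next I would do a case analysis on the two obfuscation patterns of Strategy~3 and establish the invariant separately. Under \textbf{Obfuscation~1}, the macro emits one write of a fresh decoy nonce into the original heap cell and one write of the true secret into the secure buffer. After expansion, $\rho(a)$ is updated to the buffer address, and by the Atomicity lemma the composite encode–update is a single logical store under $\alpha$, so $\widehat{M}'(a) = M(a)$; the decoy write is logged in $\Delta$ and never consulted through $\rho$. Under \textbf{Obfuscation~2}, only one physical side is touched, and the macro updates $\rho$ to point at whichever slot still carries the unchanged true value, so again $\widehat{M}'(a) = M(a)$ and the modified side becomes an ignored decoy entry.

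The third step is to show that every \emph{subsequent} access to $a$ respects $\rho$. Because the backend lowers all sensitive heap loads and stores through the common hash-indexed lookup of \S~\ref{subsec:buffermanage}, the generated code dereferences $\rho(a)$ uniformly and never reads from an ignored slot; combined with the Register Integrity lemma, which prevents secret-typed values from being spilled into ignored physical memory, this ensures no decoy can propagate into the abstract trace. At the end of each macro, the temporary entries for dummy writes in $\Delta$ are dropped by $\alpha$, completing the erasure claim.

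The main obstacle I expect is maintaining global consistency of $\rho$ across control-flow joins. If two branches of a conditional elect different obfuscation patterns for the same logical address, their post-states may disagree on which physical slot is the true target, and the join point would be ill-defined in $\alpha$. I would handle this by requiring the macro emitter to be \emph{branch-uniform}: both sides of any conditional agree on $\rho(a)$, enforced either by structurally restricting the emission pass or by inserting a trivial fix-up store at join points that re-diverts the true value to a canonical slot. Proving that the current implementation satisfies this uniformity --- or patching it to do so --- is the genuine technical burden of the proof; the per-macro case analysis itself reduces to routine algebra over $\oplus$ once $\rho$ is well-defined.
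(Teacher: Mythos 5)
Your proposal is correct, but it is considerably more rigorous than, and structurally different from, the paper's own proof. The paper disposes of the lemma in two sentences: it posits that the macro writes to a fixed ``true address'' $a_{real}$ and a fixed ``dummy address'' $a_{dummy}$, and concludes that since $a_{dummy}$ is discarded by $\alpha$ the dummy step is semantically inert. That argument silently assumes the true value never migrates, yet the design of Strategy~3 explicitly moves the secret into the secure buffer under Obfuscation~1 and leaves it in place under Obfuscation~2 --- so which physical slot is ``true'' varies per store, and the paper's own definition of $\alpha$ (which reads $M'(a)$ at the original address) does not actually recover the logical value after a diversion. Your relocation map $\rho$ is precisely the missing machinery: it makes ``true target'' a well-defined, evolving function of the state, refines $\alpha$ to read through $\rho$, and forces you to discharge the two obligations the paper never states --- that subsequent loads dereference $\rho(a)$ rather than the stale slot, and that $\rho$ is consistent at control-flow joins. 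The join-consistency worry you flag is real but is resolved by the implementation as described: the choice between Obfuscation~1 and~2 is a deterministic, secret-independent function of the address alone (the even--odd indexing of \S~\ref{subsec:obfuscation}), so both branches of any conditional compute the same $\rho(a)$ by construction and no fix-up stores are needed. In short, your argument buys a proof that actually establishes the lemma for the diversion case, at the cost of carrying $\rho$ through the whole simulation; the paper's argument is shorter only because it proves the weaker statement in which the true value never leaves its original address.
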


\begin{proof}
The diversion macro writes to the true address $a_{real}$ followed by a corresponding dummy address $a_{dummy}$.
Since $a_{dummy}$ is discarded by $\alpha$, the dummy step is semantically inert.
\end{proof}

\begin{theorem}[Trace Equivalence]
Let $\xrightarrow{*}$ denote the reflexive-transitive closure of the step relation.
For any initial states $\sigma_0 \in \mathit{P}$ and $\sigma_0' \in \mathit{P'}$ satisfying $\alpha(\sigma_0')=\sigma_0$,
if
\[
\sigma_0 \xrightarrow{e_1} \sigma_1 \xrightarrow{e_2} \cdots \xrightarrow{e_n} \sigma_n
\]
is a valid trace in $\mathit{P}$, then there exists a corresponding micro-trace in $\mathit{P}'$:
\[
\sigma'_0 \xrightarrow{*} \sigma'_1 \xrightarrow{*} \cdots \xrightarrow{*} \sigma'_n
\]
such that $\alpha(\sigma_i')=\sigma_i$ and the sequence of functional observables $(a_1,\dots,a_n)$ is preserved.
Conversely, every terminating trace of $\mathit{P'}$ maps under $\alpha$ to a valid trace of $\mathit{P}$.
\end{theorem}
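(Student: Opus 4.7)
The plan is to establish trace equivalence as a forward simulation in which the invariant $\alpha(\sigma'_i) = \sigma_i$ is preserved at every macro boundary, and then to derive the converse direction by showing that any terminating micro-trace of $\mathit{P}'$ admits a canonical macro-level partitioning whose aggregation coincides with the forward simulation. Lemmas~1--3 then serve as the per-case simulation steps for the three classes of mitigated constructs (masked store/load, secret-typed register use, diversion write), while non-sensitive instructions inherit a trivial identity simulation because they are emitted verbatim in $\mathit{P}'$.

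For the forward direction I would proceed by induction on the length $n$ of the original trace. The base case is given by the hypothesis $\alpha(\sigma'_0) = \sigma_0$. In the inductive step, assuming $\alpha(\sigma'_i) = \sigma_i$, I would perform a case analysis on the instruction at $pc_i$: a masked store/load macro is discharged by Lemma~1, which collapses the corresponding micro-trace segment $\sigma'_i \xrightarrow{*} \sigma'_{i+1}$ into an atomic abstract step with the correct post-state; a secret-typed register operation is handled by Lemma~2, which rules out spills and ensures that the masked register content projects through $\alpha$ to the expected value; a diversion/obfuscation macro is handled by Lemma~3, which guarantees that dummy writes to decoy addresses are erased by $\alpha$; and a plain non-sensitive instruction yields an immediate simulation step. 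Functional observables $(a_1,\dots,a_n)$ are preserved along the way because mitigation only touches masked values, private nonce buffers, and diversion locations, which are disjoint from observable channels.

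For the backward direction I would introduce a macro-partitioning relation that segments any terminating micro-trace of $\mathit{P}'$ into contiguous blocks, each corresponding to one emitted macro or a single plain instruction. Using the single-entry/single-exit well-formedness of macros, together with the invariant that the auxiliary environment $\Delta$ is empty at block boundaries, this partitioning is unique and total. Within each block I would invoke the converse reading of Lemmas~1--3 to exhibit an abstract step in $\mathit{P}$, and composing these steps reconstructs the valid trace to which $\alpha$ maps.

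The hard part will be justifying the backward partitioning: proving that an arbitrary terminating micro-trace of $\mathit{P}'$ decomposes uniquely into macro-aligned segments relies on a structural invariant stating that execution never enters a macro from a non-entry point (via indirect branches, exceptions, or signal delivery) and that $\Delta$ returns to $\emptyset$ at each boundary. Because this invariant is a property of the emitter and of the reserved-register discipline rather than of the semantics itself, I would state it explicitly as a well-formedness assumption on the \tool backend passes and acknowledge it as part of the trust base rather than attempt to derive it from first principles. Termination and divergence correspondence would then follow routinely, since each macro contributes a bounded number of micro-steps and all added steps are side-effect--free under $\alpha$.
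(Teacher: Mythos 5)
Your proposal follows essentially the same approach as the paper: structural induction on the length of the trace, case analysis on instruction type, and discharge of each case via Lemmas 4.1--4.3 (masked store/load, secret-typed register use, diversion write), with non-sensitive instructions handled by a trivial identity step. Where you go beyond the paper is the converse direction: the paper's proof dismisses it in one sentence (``the projections of both traces are equivalent''), whereas you sketch an actual argument via a macro-partitioning of the micro-trace, and you correctly identify that its soundness rests on a structural well-formedness invariant of the emitter (single-entry/single-exit macros, no entry into a macro body via indirect control transfer, $\Delta=\emptyset$ at block boundaries) that must be assumed rather than derived from the semantics. That explicit acknowledgment is a genuine improvement in rigor over the paper's own proof, which silently relies on the same assumption; nothing in your forward argument diverges from or conflicts with the paper's reasoning.
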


\begin{proof}
By structural induction on the length of the trace and case analysis on instruction type.
For each case, apply Lemmas 4.1–4.3 to show that the composite micro-steps of $\mathit{P}'$ correspond to a single abstract step of $\mathit{P}$.
Since $\alpha$ erases all mitigation artifacts (dummy buffers, temporaries, nonce intermediates) and preserves all functional writes/reads, the projections of both traces are equivalent.
Therefore, $\mathit{P}'$ is semantically equivalent to $\mathit{P}$ with respect to all functional observables.
\end{proof}

\parh{End-to-End Correctness.} Under the compiler’s emission discipline and static verification of the obligations,
\[
\mathit{P}' \approx \mathit{P}\ w.r.t.\ functional\ semantices.
\]
Hence, the transformation preserves correctness.

\section{Security Analysis}
\label{sec:security}

\subsection{Coverage Model and Theoretical Guarantees}
\label{subsec:coverageproof}

Both \tool\ and binary instrumentation baselines rely on dynamic taint analysis to identify secret-origin computations. Dynamic tainting, by construction, captures \textit{only} those control-flow paths exercised under the observed inputs.
Let $\mathcal{T}$ denote the set of instructions observed in all dynamic traces under a finite input set $X$.
A repair mechanism based purely on dynamic tainting ensures protection for $\mathcal{T}$ but provides no guarantee for instructions outside $\mathcal{T}$ that may be executed under unseen inputs.
Thus, the theoretical coverage of pure dynamic repair is bounded by the union of observed traces:
\[
\mathsf{Cov}_{dyn}(X) = \bigcup_{x\ \in\ X}Trace(P, x).
\]

\tool\ extends this model by coupling tainted results with compiler-based static propagation.
Let $\mathsf{SSA}(P)$ denote the program’s single static assignment graph, where edges represent def-use dependencies resolved at compile time.
When a variable $v$ is marked as secret-tainted in any dynamic trace, the compiler statically propagates this taint along $\mathsf{SSA}(P)$ to all syntactically dependent uses of $v$, including those that appear in unexecuted branches.
The resulting repair set is:
\[
\mathsf{Cov}_{cg}(X) = Closure_{ssa}(\mathsf{Cov}_{dyn}(X)),
\]
which over-approximates the dynamic coverage.
Hence, while \tool\ does not assume complete or sound taint propagation in the theoretical sense, it guarantees that \textit{for any secret-origin value observed dynamically}, all of its syntactic dependents across the program’s static representation are repaired.
In contrast, binary-level repair operates post-compilation and cannot apply equivalent propagation because its repair set is limited to the executed instruction subset $\mathsf{Cov}_{dyn}(X)$, leaving unexecuted branches unpatched.

This theoretical model explains why \tool\ can claim \textit{path-extended coverage}: every execution path in $\mathit{P}'$ that includes a use of any previously observed secret is guaranteed to have all of its relevant stores and spills repaired prior to execution.
The baseline, lacking such static closure, remains \textit{path-local}: its guarantees hold only for the dynamic traces it observed.

\subsection{Security of Strategies}
\label{subsec:strategyproof}

\parh{Baseline characteristics and limitations.}
We consider a baseline that performs single masking at the binary level: on tainted stores it writes $m \oplus v$ using freshly generated $m$ (or otherwise safe nonce management); loads decode analogously.
Both the baseline and \tool\ rely on \textit{dynamic taint analysis} to identify sensitive memory writes, so the coverage of \textit{executed, explicit} memory access instructions within a run is comparable. 
However, baseline repair is \textit{path-local}: instrumentation is emitted only for the tainted instructions that appear along the observed dynamic path and cannot \textit{statically} propagate repair to unexecuted branches before they execute. 
Moreover, the baseline operates post-compilation and thus (i) cannot eliminate secret-dependent \textit{timing/access pattern} side channels induced by the original code layout and memory effects, (ii) cannot retroactively prevent spills of ciphertext values, since register allocation decisions have already been fixed, and (iii) cannot relocate or diversify the storage of protected ciphertexts, as all masked values must still be written to their original memory locations. 
We quantify an adversary's advantage by ${Adv}(\mathcal{A}\mid \cdot)$.

\parh{Strategies S1--S3: formal obligations and lemmas.}
In \textit{S1 (Masking with nonce buffers)}, each sensitive object $o$ is deterministically mapped to a dedicated nonce slot; on the $k$-th store of $v_k$ to $o$, transformed program $P'$ updates $Nonce_{cur} \gets f(Nonce_{cur})$ with fixed, secret-independent $f$ and writes $v_k \oplus Nonce_{cur}$; reads decode using the current nonce $Nonce_{cur}$.
\textbf{Obligations:} (O1) every statically identifiable access to $o$ in $P'$ is wrapped by the encode/decode scheme; (O2) distinct objects map to disjoint nonce slots; (O3) nonce access/update code is constant-time and secret-independent; (O4) each logical store emits exactly one encode and one update (atomicity).
\begin{lemma}[Ciphertext-channel bound under S1]
Under (O1)--(O4), for any run and any two secrets $s_0,s_1$, decoding in $P'$ yields the same logical values as in $P$. 
Moreover, unlike the baseline which provides only \textit{path-local repair}, S1 achieves \textit{path-extended coverage}: all syntactic accesses to $o$ in the compiled program, including those residing in unexecuted branches, are instrumented ahead of execution. 
Consequently, ciphertext side-channel resistance holds across the entire program state space, ensuring stronger security guarantees than binary-level masking.
\end{lemma}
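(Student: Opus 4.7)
The plan is to decompose the lemma into three sub-claims mirroring its clauses: functional preservation under decoding, path-extended coverage over the compiler's SSA representation, and a ciphertext-indistinguishability bound against $\mathcal{A}$. Each sub-claim reduces to an obligation plus an earlier result already established in the excerpt, so the proof will be assembled by citing the prior semantic lemmas and the coverage model, and invoking the obligations (O1)--(O4) in the relevant places.

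For the functional-preservation clause, I would apply the Atomicity of Encode--Update lemma to each emitted macro and then lift the per-macro equivalence to entire traces via the Trace Equivalence theorem. Obligation (O1) ensures that every syntactically identifiable access to $o$ in $P'$ is wrapped by an encode/decode macro, and (O4) ensures that each logical store corresponds to exactly one such macro, so no stray access bypasses the abstraction $\alpha$. Composing these, any decode in $P'$ recovers the plaintext that $P$ would have stored at the same logical point, which is precisely the functional-preservation clause.

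For the path-extended coverage clause, I would appeal directly to the closure identity $\mathsf{Cov}_{cg}(X) = Closure_{ssa}(\mathsf{Cov}_{dyn}(X))$ from \S~\ref{subsec:coverageproof}. Once a variable derived from $o$ is tainted on any dynamic path, the compiler propagates the taint over the def-use graph prior to code emission, so (O1) fires on every syntactic use, including those inside branches that no dynamic trace from $X$ exercises. This is where the separation from binary-level baselines opens: those rewriters operate on $\mathsf{Cov}_{dyn}(X)$ only and cannot statically add instrumentation to unexecuted blocks before they run.

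For the ciphertext-channel bound, I would cast the claim as an indistinguishability game between $s_0$ and $s_1$ over the sequence $c_k = v_k \oplus Nonce_{cur,k}$ written to $o$. Because (O3) makes the update $Nonce_{cur} \gets f(Nonce_{cur})$ and all associated memory traffic independent of the secret, the nonce stream is distributionally identical under $s_0$ and $s_1$; (O2) rules out cross-object bleed into a shared nonce slot; (O4) forbids exposure of an intermediate pre-masked state between encode and write. A hybrid over the $k$ stores then reduces ${Adv}(\mathcal{A}\mid P')$ to the distinguishing advantage against $f$ itself. The main obstacle will be that the excerpt instantiates $f$ with three very different primitives (\texttt{rdrand}, AES-NI, and XorShift128+), so I would keep the conclusion parameterized by the advantage term of the chosen $f$ rather than hard-wiring a single numeric bound; a secondary subtlety is that (O1)'s ``statically identifiable'' qualifier must be interpreted relative to \tool's taint-plus-SSA closure rather than a sound whole-program alias analysis, which I would make explicit to avoid silently strengthening the assumption on heap objects with imprecise points-to information.
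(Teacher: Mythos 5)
The paper states this lemma without a dedicated proof environment; its justification is implicit in the surrounding text, namely (i) the semantic machinery of \S~\ref{design:correctness} (the Atomicity of Encode--Update lemma and the Trace Equivalence theorem) for the decoding-correctness clause, and (ii) the closure identity $\mathsf{Cov}_{cg}(X) = Closure_{ssa}(\mathsf{Cov}_{dyn}(X))$ of \S~\ref{subsec:coverageproof} for the path-extended-coverage clause. Your first two sub-claims reconstruct exactly this implicit argument, so on those clauses you are aligned with the paper and arguably more explicit than it is.

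Your third sub-claim, however, takes a genuinely different and more ambitious route than the paper, and it has a gap. You reduce the adversary's advantage via a hybrid over the $k$ stores to ``the distinguishing advantage against $f$ itself.'' But the paper's principal instantiation of $f$ (Variant \texttt{rdrand}) is \emph{increment-by-three} on a nonce drawn once from a pre-generated buffer: the nonce stream is fully predictable given its initial value, so a distinguishing game against $f$ yields a vacuous bound. The paper's actual security argument (see \S~\ref{subsec:toolproof}, ``Increment-by-Three in Nonces'') does not rest on pseudorandomness of $f$ at all; it rests on the guarantee that consecutive masked plaintexts at the same address differ in at least two bits, combined with the TEE's XEX memory encryption diffusing any plaintext difference into an unpredictable ciphertext. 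If you want to keep the game-based framing, the reduction target must be ``freshness of the masked plaintext at each address implies freshness of the observed ciphertext under the block cipher,'' not an advantage term of $f$. Note also that the lemma's final clause is a \emph{comparative} claim against the path-local baseline (the composition theorem explicitly assumes both approaches are nonce-safe), so the intended argument for it is the coverage separation from your second sub-claim, not an absolute indistinguishability bound. Your closing caveat about interpreting ``statically identifiable'' relative to the taint-plus-SSA closure rather than a sound alias analysis is a genuine and worthwhile clarification that the paper leaves implicit.
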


For \textit{S2 (Security-aware register allocation)}, secret values are assigned to a non-spillable register class; if constraints are not met, the secret is masked before any store (fail-closed).
\textbf{Obligations:} (R1) a post-allocation audit certifies that no store in $P'$ has a secret-tainted source unless it passes through an explicit masking; (R2) it preserves the invariant and introduces masking when architectural stores are unavoidable.
\begin{lemma}[Ciphertext-channel elimination under S2]
Under (R1)--(R2), for register-resident secrets the memory-ciphertext channel is \textit{absent} in $P'$, i.e., along all paths there is no architectural store whose source depends on a secret except at audited, masked boundaries. 
This strictly dominates the baseline, which operates post-compilation and thus cannot retroactively prevent secret-dependent spills: the baseline can only repair the resulting store after the fact, leaving the ciphertext content present (though masked).
\end{lemma}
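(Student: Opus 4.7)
The plan is to establish the absence of unmasked secret-dependent architectural stores in $P'$ by structural induction over the instructions in the post-allocation Machine-IR, and then separately argue strict dominance over the baseline by exhibiting a concrete gap that the baseline cannot close. The two halves of the lemma—\emph{elimination} and \emph{dominance}—are proved independently but share the same invariant machinery developed in Section 4.6.

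First, I would formalize the invariant to be preserved throughout $P'$: for every architectural store $s$ emitted by the backend, if the source operand of $s$ is secret-tainted under the SSA closure $\mathsf{Cov}_{cg}(X)$ of Section 5.1, then $s$ lies within an audited masking macro whose atomicity is guaranteed by Lemma 4.1. The proof then proceeds by case analysis on each store reachable in $P'$. For stores arising from ordinary program operations, the post-allocation audit of (R1) certifies that either the source register does not hold a secret-typed SSA value, or the store is preceded by an encode sequence analogous to Figure 5 (lines 7--11). For stores arising from backend artifacts—calling-convention spills, live-range splitting, vector register pressure, frame setup—(R2) mandates that such artifacts be either suppressed by pinning the secret to the non-spillable reserved class \texttt{xmm8}--\texttt{xmm15} (as in Lemma 4.2) or, when unavoidable, routed through an explicit masking boundary. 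Together (R1) and (R2) partition every architectural store with a secret-tainted source into benign and audited-masked cases, which is exactly the statement to be proved.

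Second, I would establish strict dominance by a direct comparison on the multiset $\mathcal{S}(P)$ of architectural stores whose sources depend on a secret. The baseline operates post-compilation, so register allocation is fixed before instrumentation begins; for any secret the baseline's compiler chose to spill, the baseline can only interpose a masking layer around the pre-existing store, and the store itself is still emitted and still produces observable ciphertext at a fixed location. Under S2, by contrast, the reserved class absorbs the secret and no such store is ever emitted, so $\mathcal{S}(P')=\emptyset$ for register-resident secrets while $\mathcal{S}(P'_{base})$ retains one entry per spill. This yields ${Adv}(\mathcal{A}\mid S2)\le {Adv}(\mathcal{A}\mid \mathrm{baseline})$, with strict inequality whenever the baseline allocator would have spilled at least one secret—precisely the ``cannot retroactively prevent spills'' wording in the lemma.

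The main obstacle I anticipate is discharging the completeness side of (R1) in the presence of backend passes that run \emph{after} register allocation—machine-level peephole rewrites, late scheduling that may duplicate loads, platform-specific lowering of SIMD save/restore sequences around calls, and shrink-wrapping. I would address this by leaning on the stability argument from Section 3.2, which restricts post-allocation passes to lowering and scheduling rather than transformations that can re-spill a pinned register; the residual obligation is then a per-target check that no ABI-mandated callee-save of \texttt{xmm8}--\texttt{xmm15} is emitted for the reserved subset, or, where such a save is unavoidable, that (R2) diverts it through an audited masking macro. Establishing this check rigorously (rather than heuristically) is where the bulk of the formalization effort lies, since it is the point at which a single unobserved backend transformation could silently reintroduce a secret-bearing plaintext store and break the lemma.
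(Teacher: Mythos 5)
Your proposal is consistent with the paper's intent, but note that the paper does not actually supply a proof for this lemma: unlike the correctness lemmas of Section~4.6, the three security lemmas of Section~5.2 are stated as direct consequences of their obligations, with no proof environment. The closest the paper comes is the proof of the Register Integrity lemma (Lemma~4.2), which asserts that secret-typed temporaries are pinned to the non-spillable class \texttt{xmm8}--\texttt{xmm15}, excluded from callee-saved and scratch registers, and that save/restore paths are type-checked, plus the informal stability argument in Section~3.2 about post-allocation passes being limited to lowering and scheduling. Your two-part decomposition --- a case analysis partitioning secret-sourced stores into those discharged by (R1) and backend artifacts discharged by (R2), followed by a multiset comparison $\mathcal{S}(P')=\emptyset$ versus $\mathcal{S}(P'_{base})\neq\emptyset$ for the dominance claim --- is a faithful and substantially more rigorous reconstruction of what the paper leaves implicit. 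Two of your observations actually expose soft spots in the paper's statement rather than in your argument: first, strict dominance holds only when the baseline's allocator would have spilled at least one secret, a condition the lemma asserts unconditionally; second, your residual obligation about ABI-mandated callee-saves of the reserved \texttt{xmm} subset and late peephole or shrink-wrapping passes is precisely the point the paper waves away with ``type-checked save/restore code paths,'' and discharging it per-target is indeed where the real formal work lies. One minor addition you should make explicit: the step from ``fewer secret-dependent architectural stores'' to ``no greater adversary advantage'' relies on a monotonicity assumption about ${Adv}(\mathcal{A}\mid\cdot)$ that neither you nor the paper states, and it should be declared as a hypothesis rather than treated as immediate.
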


With \textit{S3 (Diversion-based Obfuscation)}, each sensitive store is compiled into a fixed macro: first the real store to the intended address, then a dummy store to the remaining address, with order, sizes, alignment, and fences fixed; dummy selection/rotation is secret-independent.
\textbf{Obligations:} (D1) every sensitive store site in $P'$ is wrapped by the macro; (D2) the macro has no secret-dependent control flow or access pattern; (D3) dummy buffers are included in program-visible memory.
\begin{lemma}[Side-channel non-interference under S3]
Under (D1)--(D3), the access patterns and timing differences are indistinguishable for fixed inputs and varying secrets.
This is strictly stronger than the baseline, which inherits the original program's secret-dependent access pattern/timing behavior and cannot eliminate these channels post hoc.
\end{lemma}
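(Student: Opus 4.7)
\parh{Proof Plan.}
The plan is to establish non-interference by a coupling argument between two executions of $\mathit{P}'$ that share all public inputs but differ in the secret, and to show that their observable traces (address sequences and per-step latencies) coincide. First I would pin down the adversary's observation function $\mathsf{Obs}$ on a micro-trace of $\mathit{P}'$ in the operational model of Section 4: let $\mathsf{Obs}(\sigma'_0 \xrightarrow{*} \sigma'_n)$ be the sequence of triples $(pc, a, t)$ at each micro-step, where $a$ is the accessed address and $t$ is a deterministic latency that depends only on opcode and $pc$. Under (D2), the diversion macro is leaf-style code with no secret-dependent branches or indirect addresses, so this deterministic latency assumption is justified within the macro.

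Next I would carry out the coupling by induction on trace length. Pick secrets $s_0, s_1$ and denote $\tau_b = \sigma'_0[s_b] \xrightarrow{*} \sigma'_n[s_b]$ for $b\in\{0,1\}$. For a non-sensitive micro-step, the control-flow and effective address are determined entirely by public state, so the two traces advance in lockstep with identical $(pc, a, t)$. For a sensitive store site, (D1) forces entry into the macro; within the macro, (D2) fixes statically (i) the real-then-dummy ordering, (ii) both target addresses $a_{\mathit{real}}$ and $a_{\mathit{dummy}}$, (iii) the sizes, alignment, and fence placement, and (iv) the dummy-rotation schedule. Hence the address sequence emitted is identical in $\tau_0$ and $\tau_1$; only the \emph{contents} of the stores differ, and contents are not part of $\mathsf{Obs}$. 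Summing the deterministic latencies over this common schedule (with (D3) preventing microarchitectural elision of dummy stores) gives $\mathsf{Obs}(\tau_0)=\mathsf{Obs}(\tau_1)$, which is the desired indistinguishability.

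For the ``strictly stronger than the baseline'' clause I would exhibit a canonical separating pattern: a conditional secret update such as a store guarded by a secret-dependent predicate, as in the constant-time swap of \F~\ref{fig:channel}(b). Baseline masking rewrites the store value to $v\oplus m$ but leaves the surrounding predicate intact, so the target address appears in the trace iff the predicate is true, leaking the decision bit through the access pattern and the cycle count. Under S3, (D1) emits the macro unconditionally at the store site, so both the real address and a dummy address are touched on every path, and the residual access trace no longer depends on the secret. Because binary-level repair cannot synthesize the missing dummy branch without reintroducing a secret-dependent control transfer, the gap is structural rather than quantitative, which is precisely the strictness claim.

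The main obstacle I expect is the timing half of the argument: formally ruling out microarchitectural divergences (cache residency shifts caused by dummy buffer layout, store-buffer coalescing, TLB pressure) requires hardware assumptions beyond (D1)--(D3). My plan is to state the lemma's guarantee relative to the abstract observation model inherited from Section 4's operational semantics, absorbing hardware assumptions into the deterministic-latency assignment of $\mathsf{Obs}$, and to flag that hardening against richer microarchitectural adversaries is out of scope of S3's obligations. With $\mathsf{Obs}$ so fixed, the per-step equalities established by the coupling compose straightforwardly into the trace-level equality, and combined with the separating example the two halves of the lemma follow.
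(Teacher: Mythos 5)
The paper states this lemma with no accompanying proof: the obligations (D1)--(D3) are offered as the entire justification, and the surrounding text simply asserts that a fixed, secret-independent macro at every store site yields indistinguishable access patterns and timing. Your coupling argument is therefore a genuine addition rather than a restatement, and its core --- lockstep induction on the micro-trace, with the macro's statically fixed ordering, addresses, sizes, and fences forcing identical $(pc,a,t)$ sequences so that only store \emph{contents} (invisible to $\mathsf{Obs}$) differ --- is the right formalization of what the paper intends. Your decision to relativize the timing half to an abstract deterministic-latency model and flag microarchitectural divergence as out of scope is also appropriate and more honest than the paper's unqualified claim.

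There are two concrete problems. First, your base case for non-sensitive micro-steps asserts that ``the control-flow and effective address are determined entirely by public state,'' but nothing in (D1)--(D3) delivers this: those obligations constrain only the diversion macro and the store sites it wraps, not the residual program. Full-trace non-interference for varying secrets requires the additional hypothesis that the original program is already constant-time outside the instrumented sites (secret-independent branches and indices). That hypothesis is consistent with the paper's setting --- Section 2.2 targets constant-time cryptographic code broken only by the ciphertext channel --- but it must be stated explicitly, otherwise the induction fails at the first secret-dependent branch the macro never touches. Second, your separating example for strictness mischaracterizes \texttt{BN\_constant\_swap} (\F~\ref{fig:channel}(b)): in a constant-time swap the store executes unconditionally and the leak is that the \emph{ciphertext} changes iff the swap occurred, not that the address appears in the trace iff a predicate holds. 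A store genuinely guarded by a secret-dependent branch would support your access-pattern separation, but it is not the construct in the cited figure; as written, the baseline and S3 produce the same address trace on that example, so it does not witness the claimed strict gap in access-pattern behavior.
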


\parh{Composition theorem.}
Let $D^\star = S3 \circ S2 \circ S1$ and assume all obligations (O1)--(O4), (R1)--(R2), (D1)--(D3) hold via compiler-side checks.
Then, for the protected region $R$ and any two secrets $s_0,s_1$, (i) by S3, access pattern or timing side channels are inexistent; (ii) by S2, for register-resident secrets the ciphertext side channel is eliminated, since spills are prevented at the register allocation stage; (iii) by S1, for the remaining necessary stores the ciphertext side channel leakage is less than path-local repair.
Consequently,
\[
  {Adv}(\mathcal{A}\mid D^\star) \;<\; {Adv}(\mathcal{A}\mid \text{baseline}),
\]
since the baseline cannot (i) pre-instrument unexecuted branches, (ii) eliminate access pattern/timing channels, nor (iii) prevent spills of secret values before they occur (it can only repair their resulting stores). 
Importantly, this comparison does not rely on nonce reuse weaknesses: both approaches are assumed nonce-safe.
Hence $D^\star$ yields strictly stronger (channel elimination and non-interference) and deeper (a priori path coverage and verifiable audits) security guarantees than baseline masking.

\subsection{Security of Implementation}
\label{subsec:toolproof}

\parh{Locations of Nonce Buffers.}
In \tool, the nonce buffers contain both pre-generated random numbers and currently used nonces, stored in the \texttt{.bss} section. 
The security of masking operations across all strategies relies on the integrity and confidentiality of these nonce buffers to prevent leakage. 
While accessing these buffers may create execution traces that could be exploited, their security is protected by Address Space Layout Randomization (ASLR), which loads the \texttt{.bss} section into different memory locations each time. 
Attackers would need to exploit memory vulnerabilities in the software, which is not the main focus of \tool's protections. 
Thus, \tool\ provides similar security guarantees as \ftool\ for nonce buffers.
Furthermore, even if ASLR is bypassed~\cite{jang2016breaking}, the security of the nonce buffers remains robust due to SEV memory encryption, which ensures that these buffers are always encrypted.
The pre-generated nonce buffer stays consistent, leading to no variation in its ciphertext, while the currently used nonce is updated with each use during masking operations, resulting in unpredictable ciphertext.

A similar security analysis is applied to the secure buffer in Strategy 3, ensuring that an attacker cannot extract valid information from the changing decoy values (varying nonces) or the swapped secrets.

\parh{Re-use of Nonces.}
In \tool, the nonce selection is based on the address of the target secret, leading to nonce reuse for secrets at the same address. When a function repeatedly calls the same callee, the stack frames share the same address space. 
However, \tool\ maintains security through two key mechanisms: varying parameters that keep ciphertext unpredictable under SEV and a masking process that safeguards the callee against information leakage. 
As a result, an attacker can only deduce that the same secret is being used by noticing identical ciphertext sequences.

\begin{figure}[htbp]
\centering
\includegraphics[width=0.95\linewidth]{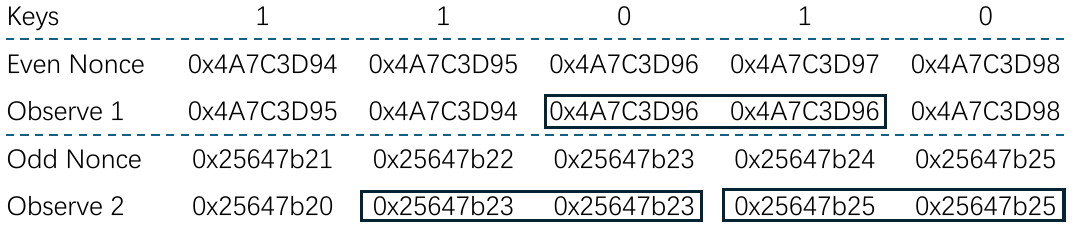}
\caption{A corner case arises when one-bit secret changes.} 
\label{fig:increment}
\end{figure}

\begin{figure*}[ht]
\centering
\includegraphics[width=0.9\textwidth]{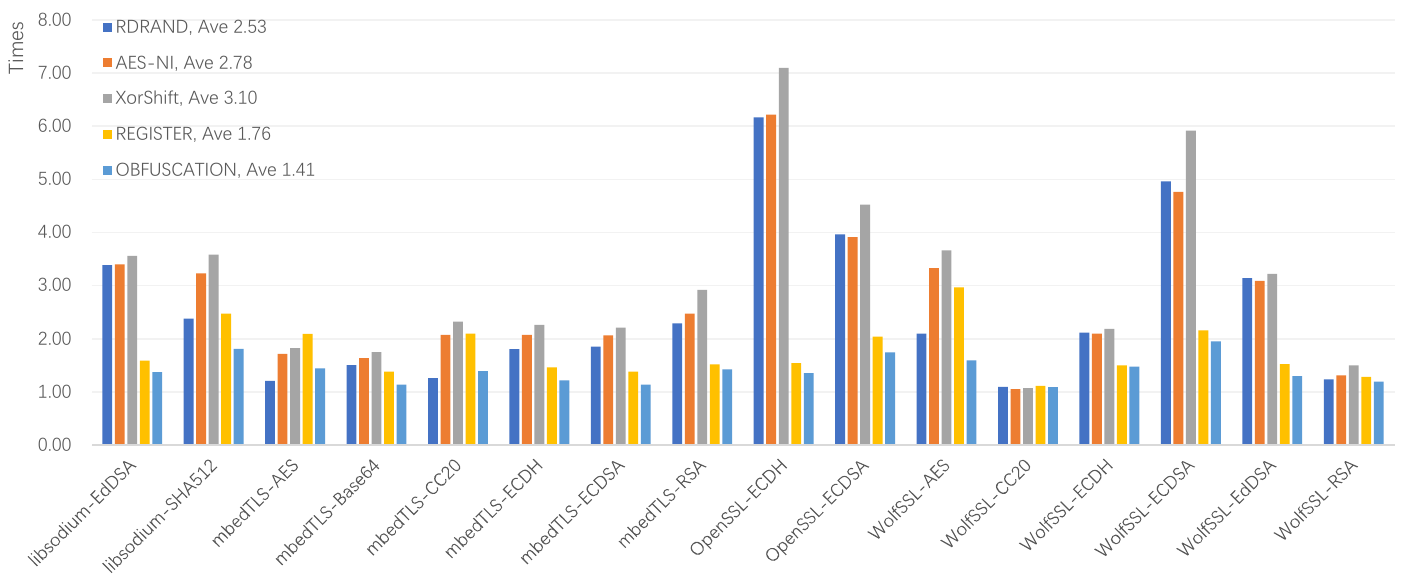}
\caption{Performance statistics towards mitigated cryptography software with 3 strategies of \tool. XS+ is short for XorShift128+. CC20 is short for ChaCha20.}
\label{fig:performance}
\end{figure*}

\begin{table*}[htbp]
\centering
\caption{Detailed performance statistics of \tool. XS+ is short for XorShift128+. CC20 is short for ChaCha20.}
\label{tab:resultsoverview}
\scriptsize
\resizebox{\linewidth}{!}{
\begin{tabular}{c|cccccc|cc|cc}
\hline
    \multirow{2}{*}{\textbf{Implementation-Orig}}
    &\multicolumn{6}{c|}{\textbf{Variant 1}}
    &\multicolumn{2}{c|}{\textbf{Variant 2}}
    &\multicolumn{2}{c}{\textbf{Variant 3}}\\
    & \textbf{RDRAND} & \textbf{Factor}
    & \textbf{AES} & \textbf{Factor} & \textbf{XS+} & \textbf{Factor}
    & \textbf{REGISTER} & \textbf{Factor}
    & \textbf{OBFUSCATION} & \textbf{Factor}\\
\hline
    libsodium-EdDSA-191618	&648861	&3.39 &	651930	&3.40 	&682159	&3.56 &	304067	&1.59 & 262942 & 1.37\\
    libsodium-SHA512-42482	&101088	&2.38 &	137298	&3.23 	&152227	&3.58 &	105061	&2.47 & 76812 & 1.81\\
\hline
    mbedTLS-AES-474926	&574695	&1.21 &	815542	&1.72 	&866192	&1.82 &	993255	&2.09 & 684987 & 1.44\\
    mbedTLS-Base64-11220	&16877	&1.50 &	18360	&1.64 	&19642	&1.75 &	15493	&1.38 & 12769 & 1.14\\
    mbedTLS-CC20-597000	&752978	&1.26 &	1236573	&2.07 	&1385737	&2.32 &	1251150	&2.10 & 831700 & 1.39\\
    mbedTLS-ECDH-4253887	&7680051	&1.81 &	8819828	&2.07 	&9626046	&2.26 &	6211978	&1.46 & 5189010 & 1.22\\
    mbedTLS-ECDSA-3942140	&7310694	&1.85 &	8133126	&2.06 	&8717678	&2.21 &	5448579	&1.38 & 4489770 & 1.14\\
    mbedTLS-RSA-1884568	&4314759	&2.29 &	4658550	&2.47 	&5504460	&2.92 &	2861580	&1.52 & 2681874 & 1.42\\
\hline
    OpenSSL-ECDH-986228	&6081644	&6.17 &	6132829	&6.22 	&6997656	&7.10 &	1519590	&1.54 & 1334321 & 1.35\\
    OpenSSL-ECDSA-16854915	&66816768	&3.96 &	65915307	&3.91 	&76290783	&4.53 &	34374077	&2.04 & 29399735 & 1.74\\
\hline
    WolfSSL-AES-639055	&1341064	&2.10 &	2126681	&3.33 	&2340311	&3.66 &	1897460	&2.97 & 1018463 & 1.59\\
    WolfSSL-CC20-481650	&526643	&1.09 &	507535	&1.05 	&517214	&1.07 &	537044	&1.12 & 525956 & 1.09\\
    WolfSSL-ECDH-348690	&737935	&2.12 &	731250	&2.10 	&761804	&2.18 &	522330	&1.50 & 515347 & 1.48\\
    WolfSSL-ECDSA-4226940	&20976600	&4.96 &	20141736	&4.77 	&25013569	&5.92 &	9132870	&2.16 & 8254290 & 1.95\\
    WolfSSL-EdDSA-419507	&1317104	&3.14 &	1296690	&3.09 	&1350810	&3.22 &	639840	&1.53 & 543600 & 1.30\\
    WolfSSL-RSA-491970	&608190	&1.24 &	645980	&1.31 	&737520	&1.50 &	631806	&1.28 & 586290 & 1.19\\
\hline
    Average	&-	&2.53 &	-	&2.78 	&-	&3.10 &	-	&1.76 & - & 1.41\\
\hline
\end{tabular}}
\end{table*}

\parh{Increment-by-Three in Nonces.}
Updating nonces in Variant \texttt{rdrand} is essential for ensuring \tool's security. 
When consecutive keys differ significantly, masked secrets produce unpredictable ciphertexts. 
However, in cases where the key changes by only one bit, a weak nonce update (e.g., incrementing by one) can expose this change through identical masked plaintexts. 
For example, as illustrated in \F~\ref{fig:increment}, observing masked plaintexts starting from even and odd nonces can leak up to four key bits, weakening security.
This weak nonce update scheme can be resolved by incrementing the nonce by three, ensuring that at least two bits of the masked plaintext differ between updates. 
This approach blocks identical ciphertexts from revealing bit changes without performance penalty.

\section{Evaluation}
\label{sec:evaluation}

\subsection{Implementation and Experiment Setup}
\label{sec:implementation}

\tool\ is built within the LLVM-9 framework, consisting of 4.6K lines of C++ code, but \textbf{it is not dependent on any LLVM-specific features}, making it portable to other compiler frameworks (discussed further in \S~\ref{sec:discussion}). 
The source code is publicly available online, with ongoing plans to migrate it to a more recent version of LLVM.
Developers can easily use \tool\ by manually identifying secrets in cryptography programs and labeling them with the DFSan API, such as inserting \texttt{dfsan\_set\_label(label, key, sizeof(key))} record both the pointer and size of the secret.
Afterward, the taint analysis and mitigation processes are handled automatically without further intervention.

We evaluated \tool\ with several cryptography libraries, i.e., libsodium-1.0.18, mbedTLS-3.3.0, OpenSSL-3.0.2, and WolfSSL-5.3.0. 
All experiments were performed on an AMD EPYC 7513 CPU with SEV-SNP enabled, running Ubuntu 20.04. 
For comparison with \ftool, the same library snapshots used in \ftool\ evaluations were tested. 
Default optimization levels were applied, with \texttt{-O3} for OpenSSL and \texttt{-O2} for the other libraries.

\subsection{Performance Overhead}
\label{subsec:overhead}

We use \texttt{rdtsc} to accurately measure the execution cycles. 
The performance overhead (averaged over 200 iterations) for the patched libraries using \tool\ are illustrated in \F~\ref{fig:performance} (see \T~\ref{tab:resultsoverview} for detailed statistics).

Variant \texttt{rdrand} in Strategy 1 incurs an average overhead of 2.53$\times$, ranging from 1.09$\times$ in the patched ChaCha20 of WolfSSL to 6.17$\times$ in OpenSSL's ECDH. 
Variant AES-NI and XorShift128+ use two methods for on-the-fly mask generation, resulting in average overheads of 2.78$\times$ and 3.10$\times$. 
These approaches slightly reduce performance compared to Variant \texttt{rdrand}'s optimized scheme. 
The higher overhead in XorShift128+ is due to its multiple-instruction process, which takes longer than the single \texttt{vaesenc} instruction from AES-NI.
Strategy 2 improves performance by keeping sensitive data in registers, achieving a lower average overhead of 1.76$\times$ compared to Variant \texttt{rdrand}'s 2.53$\times$. 
However, this performance gain is not consistent across all cryptography libraries due to the trade-off in SSE register usage. While fewer mask instructions are needed, the reduced availability of SSE registers can impact other parts of the program.
Strategy 3 delivers the best performance, incurring a geo-mean overhead of only 1.41$\times$, with a maximum of 1.95$\times$.  
This improvement is achieved by selectively obfuscating memory units, rather than applying masking to every sensitive memory access, as done in Strategy 1.

\subsection{Factors Contributing to Overhead}
\label{subsec:factors}

\parh{\texttt{rdrand} as the Bottleneck.}
Employing on-the-fly random number generation as nonces offers strong security but can lead to significant performance overhead, especially when relying on \texttt{rdrand}, as seen in \ftool-\textsc{Fast}, which incurs an average overhead of 16.8$\times$ and a peak of 40$\times$. 

To evaluate the performance benefits of Variant \texttt{rdrand} of \tool\ compared to the on-the-fly method, we examined both asymmetric cryptography libraries (RSA and ECDSA) and symmetric cryptography libraries (AES and ChaCha20). 
The results in \T~\ref{tab:variant1benefit} show significant improvement for asymmetric libraries, averaging 6.79$\times$, while the improvement for symmetric libraries is more modest at 1.39$\times$. 
This demonstrates the effectiveness of pre-generating a nonce buffer in enhancing performance, especially for asymmetric cryptography.

\begin{table}[htbp]
\centering
\caption{Performance improvement by pre-generated nonce buffer. CC20 is short for ChaCha20.}
\label{tab:variant1benefit}
\scriptsize
\begin{tabular}{cccc}
\hline
    \multirow{2}{*}{\textbf{Implementation}}
    &\multicolumn{1}{c}{\textbf{Strategy 1}}
    &\multicolumn{1}{c}{\textbf{On-the-fly \texttt{rdrand}}}
    &\multirow{2}{*}{\textbf{Promotion}}\\
    & \textbf{Factor} & \textbf{Factor} & \\
\hline
    ECDSA-mbedTLS & 1.85 & 10.87 & 5.87 \\
    ECDSA-OpenSSL & 3.96 & 18.92 & 4.77 \\
    ECDSA-WolfSSL & 4.96 & 18.60 & 3.75 \\
    RSA-mbedTLS & 2.29 & 14.91 & 6.51 \\
    RSA-WolfSSL & 1.24 & 16.21 & 13.07 \\
\hline
    AES-mbedTLS & 1.21 & 2.38 & 1.96 \\
    AES-WolfSSL & 2.10 & 3.10 & 1.47 \\
    CC20-mbedTLS & 1.26 & 1.30 & 1.03 \\
    CC20-WolfSSL & 1.09 & 1.24 & 1.13 \\
\hline
    Average & 2.21 & 9.72 & 4.39 \\
\hline
\end{tabular}
\end{table}

\parh{Alternative Nonce.}
Unlike Variant \texttt{rdrand}, which selects and increments a random number for each sensitive memory write, Variant AES-NI and XorShift128+ generate a random nonce directly when an updated nonce is needed. 
AES-NI achieves this with a single \texttt{vaesenc} instruction, while XorShift128+ requires multiple instructions, resulting in slightly higher overhead. 
Additionally, these two variants require reserving a portion of the SSE vector registers (two for AES-NI and three for XorShift128+), which can affect other parts of the cryptography library, adding more overhead compared to Variant \texttt{rdrand}, as reflected in performance results.

\parh{The Profit of Registers.}
The primary benefit of using registers to store intermediate results is the reduced memory access overhead. 
However, the performance advantage of Strategy 2 over Variant \texttt{rdrand} is inconsistent across all cryptographic applications. 
We hypothesize that Strategy 2's reliance on SSE registers may limit their availability in other program areas. 
To test this, we introduce a new strategy called ``RSV-SSE'', which combines the approach of Variant \texttt{rdrand} with the reservation of 8 unused SSE registers. 
This method enables the compiler to either allocate these registers freely, as in Variant \texttt{rdrand}, or reserve them for storing secrets from sensitive memory access instructions, similar to Strategy 2.

\begin{table}[htbp]
\centering
\caption{Profit analysis of Strategy 2.}
\label{tab:variant3benefit}
\scriptsize
\begin{tabular}{ccccc}
\hline
    \multirow{2}{*}{\textbf{Implementation}}
    &\multicolumn{1}{c}{\textbf{V-RDRAND}}
    &\multicolumn{2}{c}{\textbf{RSV-SSE}}
    &\multicolumn{1}{c}{\textbf{Strategy 2}}\\
    \cline{3-4}
    & \textbf{Factor} & \textbf{Cycles} & \textbf{Factor} & \textbf{Factor} \\
\hline
    libsodium-EdDSA &3.39 &680723	&3.55 &1.59 \\
    libsodium-SHA512 &2.38 &121020	&2.85 &2.47 \\
\hline
    mbedTLS-AES &1.21 &1135500	&2.39 &2.09 \\
    mbedTLS-Base64 &1.50 &20820	&1.86 &1.38 \\
    mbedTLS-CC20 &1.26 &1329720	&2.23 &2.10 \\
    mbedTLS-ECDH &1.81 &7981500	&1.88 &1.46 \\
    mbedTLS-ECDSA &1.85 &10137960	&2.57 &1.38 \\
    mbedTLS-RSA &2.29 &5459100	&2.90 &1.52 \\
\hline
    OpenSSL-ECDH &6.17 &6121680	&6.21 &1.54 \\
    OpenSSL-ECDSA &3.96 &73640040	&4.37 &2.04 \\
\hline
    WolfSSL-AES &2.10 &1916760	&3.00 &2.97 \\
    WolfSSL-CC20 &1.09 &614670	&1.28 &1.12 \\
    WolfSSL-ECDH &2.12 &849300	&2.44 &1.50 \\
    WolfSSL-ECDSA &4.96 &21068010	&4.98 &2.16 \\
    WolfSSL-EdDSA &3.14 &1428990	&3.41 &1.53 \\
    WolfSSL-RSA &1.24 &902452	&1.83 &1.28 \\
\hline
    Average &2.53 &- &2.98 &1.76 \\
\hline
\end{tabular}
\end{table}

\T~\ref{tab:variant3benefit} presents the results of the new ``RSV-SSE'' strategy, which demonstrates an average overhead of 2.98$\times$, higher than that of both Variant \texttt{rdrand} and Strategy 2. 
Certain cryptographic functions (SHA512 from libsodium, AES and ChaCha20 from mbedTLS, AES, ChaCha20 and RSA from WolfSSL) show decreased performance when SSE registers are used to safeguard sensitive memory accesses, negatively impacting other program components. 
Conversely, functions that benefit from this strategy display significant performance improvements, particularly ECDSA from OpenSSL, where Strategy 2 shows exceptional profitability.

We further investigate the allocation of SSE registers for sensitive memory accesses in the ECDSA implementation from OpenSSL. By adhering to the secret-aware register allocation principles outlined in \S~\ref{subsec:registeralloc}, we analyze the range of MBBs in which these registers are assigned.
The findings indicate a significant advantage when registers are utilized to store stack memory within loops, leading to reduced cycle counts compared to using masking for protection, as seen in Variant \texttt{rdrand}. 
However, if protection extends to insufficient sensitive stack memory within a loop, such as allocating SSE registers for sequential sensitive memory accesses, the benefits may not outweigh the drawbacks of not using these registers in other areas of the cryptographic program. 

\parh{Obfuscation reducing memory instructions.}
The performance improvement achieved through obfuscation, compared to register-based protection, primarily stems from the reduction in memory load operations for currently used nonces. 
Our odd-even selection strategy ensures that each heap unit undergoes two memory write operations, one for inserting the true secret and another for storing the decoy value (with varying nonces) into either the original heap position or the secure buffer. 
However, each corresponding load operation requires only a single memory access, avoiding the additional memory loads needed in Strategy 2 to retrieve the currently used nonce from its buffer. 
As a result, Strategy 3 significantly reduces the total number of memory instructions, leading to improved performance compared to approaches that rely on complex nonce management.



\begin{table*}[t]
\centering
\caption{Performance comparison with \ftool\ based on the same number of tainted functions. The replication of \ftool\ is conducted on its \textsc{Fast} version. To quantify the additional average cycles incurred by the patched cryptography library, we introduce a metric labeled ``consume'', which shows an average consumption of 3,661 cycles for \tool\ compared to a significant 3.46$\times$ increase for \ftool.}
\label{tab:comparisonftool}
\footnotesize
\resizebox{\linewidth}{!}{
\begin{tabular}{ccccc|cccc|cccc}
\hline
    \multirow{2}{*}{\textbf{Implementation}}
    &\multicolumn{4}{c|}{\textbf{\ftool\quad \tool}}
    &\multicolumn{4}{c|}{\textbf{\ftool-\textsc{Fast}}}
    &\multicolumn{4}{c}{\textbf{\tool-Variant 2}}\\
    \cline{2-13}
    & \textbf{FUN} & \textbf{INS} & \textbf{FUN} & \textbf{INS}
    & \textbf{Orig} & \textbf{AES} & \textbf{Factor} & \textbf{Consume}
    & \textbf{INS} & \textbf{AES} & \textbf{Factor} & \textbf{Consume}\\
\hline
    libsodium-EdDSA &14 &616 &17 &1311 & 131790 &779580 &5.92 & 1052 & 630 &271557	&1.42	&127 \\
    libsodium-SHA512 &6 &155 &6 &586 & 60060 &103920 &1.73 & 283 & 211 &84139	&1.98	&197 \\
\hline
    mbedTLS-AES &19 &96 &17 &326 & 312990 &1165470 &3.72 & 8880 & 318 &780317	&1.64	&960 \\
    mbedTLS-Base64 &5 &25 &9 &386 & 10230 &16440 &1.61 & 248 & 26 &14897	&1.33	&141 \\
    mbedTLS-ChaCha20 &15 &234 &14 &275 & 397080 &922800 &2.32 & 2247 & 267 &1193730	&2.00	&2235 \\
    mbedTLS-ECDH &20 &65 &51 &1063 & 3314160 &10658700 &3.22 & 112993 & 171 &7760267	&1.82	&20505 \\
    mbedTLS-ECDSA &51 &448 &69 &1446 & 1385790 &10117620 &7.30 & 19491 & 1156 &7179808	&1.82	&2801 \\
    mbedTLS-RSA &35 &300 &44 &1238 & 2893260 &11347680 &3.92 & 28181 & 673 &3640590	&1.93	&2609 \\
\hline
    OpenSSL-ECDH &11 &117 &721 &12876 & 373800 &583680 &1.56 & 1794 & 123 &1280850	&1.30	&2395 \\
    OpenSSL-ECDSA &91 &653 &796 &17834 & 3236490 &6735210 &2.08 & 5358 & 991 &22131750	&1.31	&5325 \\
\hline
    WolfSSL-AES &5 &204 &6 &499 & 401550 &725040 &1.81 & 1586 & 192 &809066	&1.27	&885 \\
    WolfSSL-ChaCha20 &9 &182 &14 &404 & 706740 &1022880 &1.45 & 1737 & 343 &505800	&1.05	&70 \\
    WolfSSL-ECDH &10 &291 &20 &543 & 191280 &385440 &2.02 & 667 & 147 &701408	&2.01	&2399 \\
    WolfSSL-ECDSA &40 &328 &59 &932 & 3143010 &8558370 &2.72 & 16510 & 242 &8321686	&1.97	&16920 \\
    WolfSSL-EdDSA &31 &835 &28 &715 & 312360 &746250 &2.39 & 520 & 616 &893040	&2.13	&769 \\
    WolfSSL-RSA &48 &696 &43 &704 & 537750 &1200660 &2.23 & 952 & 529 &615510	&1.25	&234 \\
\hline
    Average &26 &328 &120 &2571 & - &- &2.87 &12656 & 415 &-	&1.64	&3661 \\
\hline
\end{tabular}}
\end{table*}

\subsection{Compared with \ftool}
\label{subsec:comparison}

\noindent \textbf{Dynamic Taint Analysis.}
Identifying sensitive memory access instructions in each cryptography library takes less than 20 minutes. 
Specifically, taint analysis for symmetric primitives like AES and ChaCha20 is completed in just 2 to 3 minutes. 
Although asymmetric primitives such as RSA and ECDSA, which use blinding and nonces, require more time, the taint analysis still finishes within approximately 10 minutes, which is considered a reasonable duration.

In the evaluation shown in \T~\ref{tab:comparisonftool}, we systematically document the number of functions and instructions identified by taint analysis for each cryptography library, as reflected in the fourth and fifth columns. 
The results reveal a wide range, with a minimum of 6 sub-functions in SHA512 from libsodium and a maximum of 796 sub-functions in ECDSA from OpenSSL, averaging 120 functions across the analyzed implementations. 
The tainted instructions within these functions vary significantly, from 275 in ChaCha20 of mbedTLS to 17,834 instructions in ECDSA of OpenSSL. 
In comparison to \ftool, as illustrated in the second and third columns of \T~\ref{tab:comparisonftool}, our approach demonstrates more comprehensive and detailed tracking of sensitive data.

\noindent \textbf{Mitigation Coverage Comparison.}
Accurate identification of tainted instructions is essential, as these instructions are key candidates for the subsequent mitigation process. 
Although \tool\ and \ftool\ use different compilers, Clang and GCC, both maintain consistent control flow when subjected to the same optimization settings. 
Capitalizing on this similarity, we extract binaries with instrumentation information from the taint analysis stage and analyze the basic blocks of tainted functions by constructing control-flow graphs (CFGs). 
For instance, we compare the CFGs of the function \texttt{SHA512\_Transform} from libsodium-SHA512 in \F~\ref{fig:cfg}. 
The CFGs exhibit notable structural similarities, and we manually identify three key nodes to further refine the control flow. 
A line-by-line comparison of the code within both binaries reveals only minor differences, averaging 2\%. 
Thus, we conclude that the variance in compilers does not significantly affect the discrepancy in the number of tainted instructions identified by \tool\ and \ftool\ as shown in \T~\ref{tab:comparisonftool}.

\begin{figure}[htbp]
\centering
\includegraphics[width=0.95\linewidth]{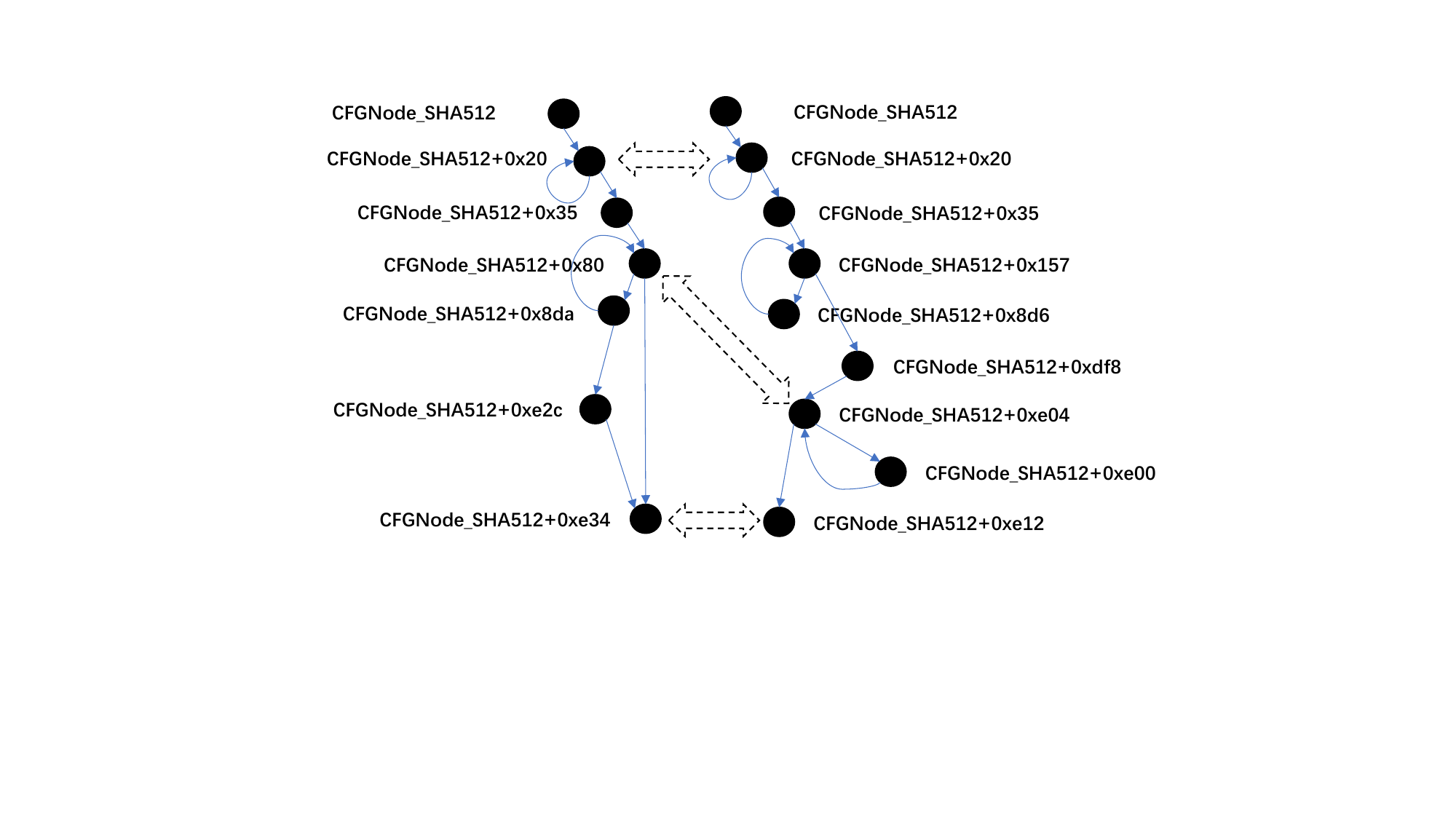}
\caption{Function \texttt{SHA512\_Transform} from the libsodium-SHA512 serves as an example to illustrate the construction of CFGs and identify critical nodes.}
\label{fig:cfg}
\end{figure}

Our analysis reveals that the cryptography libraries listed in \T~\ref{tab:comparisonftool} are monitored for a similar number of tainted functions, with the exception of two implementations from the OpenSSL libraries. 
To investigate the variance in taint analysis among these libraries, we randomly select five tainted functions for comparison. 
We find that the additional instructions identified by \tool\ can be categorized into several types.
Notably, \tool\ conservatively tags the parameters of temporary variables used to store intermediate results, such as loop variables involved in sensitive calculations. 
In contrast, \ftool\ overlooks these intermediate and resultant variables, focusing primarily on directly involved secret keys.
Consequently, in the OpenSSL libraries, \tool\ identifies not only the parameters of temporary variables but also intermediate and resultant variables, resulting in more comprehensive identification of relevant functions and instructions compared to \ftool, as illustrated in \T~\ref{tab:comparisonftool}.

\noindent \textbf{Performance Comparison.}
To ensure a fair performance comparison between \tool\ and \ftool, we evaluate both tools using as many common tainted instructions as possible within the same cryptography library. 
However, achieving this is challenging due to the variability in the number of tainted functions and instructions identified by each tool. For instance, protecting tainted instructions within loops can lead to increased execution cycles. 
To mitigate this issue, we manually select tainted functions that are common to both tools, allowing \tool\ to apply its mitigation process to functions flagged by \ftool. 
This approach helps align the execution flow between the two tools. Nevertheless, \tool\ still identifies more instructions than \ftool, although we strive to make the comparison as close as possible.

The details of selected tainted instructions are presented in the fourth-to-last columns of \T~\ref{tab:comparisonftool}. 
Both \tool\ and \ftool\ utilize the \texttt{vaesenc} instruction from AES-NI to insert masking instructions, thereby eliminating the influence of any optimization specific to \tool. 
The results indicate that \tool\ achieves an average overhead of 1.64$\times$, whereas \ftool\ experiences a higher overhead of 2.87$\times$. 
We introduce the metric ``consume'' to quantify the average additional cycles, with \tool\ incurring 3,661 cycles compared to a 3.46$\times$ increase observed for \ftool.

The performance differences between \tool\ and \ftool\ are largely attributed to specific characteristics of \ftool\ that involve frequent jumps to instrumentation code, monitoring of \texttt{malloc} for heap memory allocation of nonce buffers, and the runtime initialization of the nonce cache. 
In contrast, \tool\ employs a compiler-aided approach that optimizes these time-intensive tasks by sequentially inserting mask instructions and efficiently managing nonce buffers in the \texttt{.bss} section. 
This design choice significantly enhances the mitigation process, reducing associated overhead and improving overall performance.

\subsection{Mitigation Cases}
\label{app_cases}

\parh{ossl\_ec\_scalar\_mul\_ladder.}
Using the ECDSA Montgomery ladder algorithm from OpenSSL as an example (\F~\ref{fig:channel}(a)), we examine how secrets are protected by masking operations in Strategy 1 of \tool\ with three different random number generation techniques.
The effectiveness of our software-based probabilistic encryption scheme relies heavily on the quality of random number generation, as the reliability and randomness of nonces are crucial. 
Without protection, a vulnerability at the point where $pbit \leftarrow pbit \wedge kbit$ spills the secret $pbit$ into memory, allowing attackers to infer each bit of the secret by mapping ciphertext-plaintext pairs.
To assess the impact of our masking method, we disassemble the binary to locate the vulnerability, then use Pintool~\cite{luk2005pin} to observe the execution context at this point.
By running the cryptography library twice with different ECDSA private keys, we track the writing of $pbit$ 512 times within the loop. 
We then measure the entropy of the collected secret sequence as $\mathit{H(X)} = - \sum_{x \in \mathcal{X}} p(x) log_{2} p(x)$ to quantify the distribution changes and evaluate the scheme’s effectiveness.

\begin{figure}[htbp]
\centering
\includegraphics[width=0.95\linewidth]{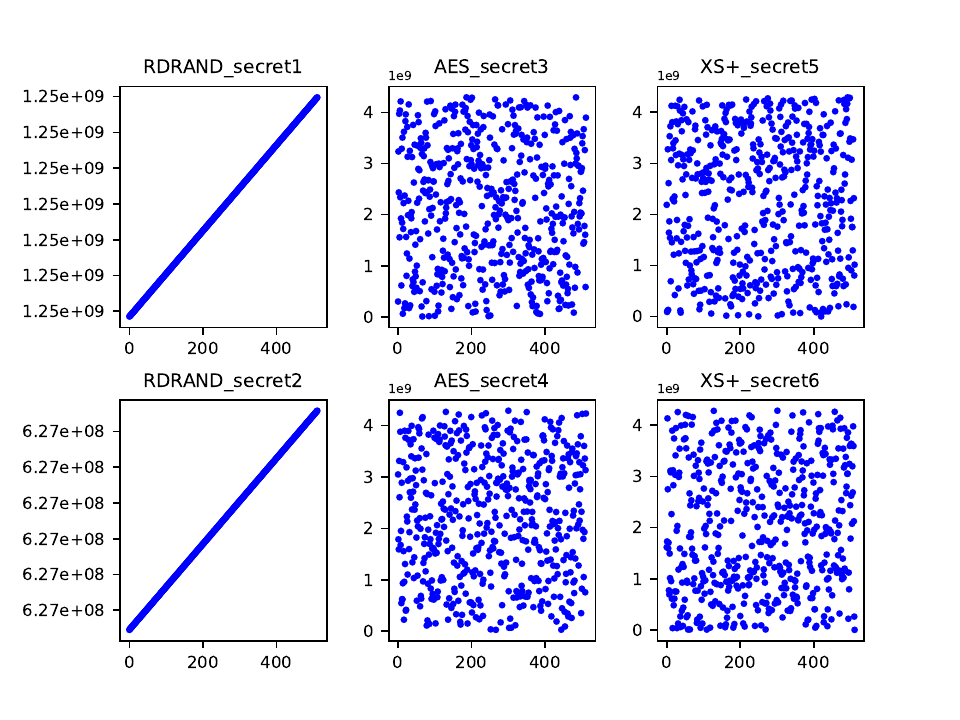}
\caption{Scatter distribution of masked $pbit$ under different variants. Each secret sequence comprises 512 values.}
\label{fig:entropy}
\end{figure}

The entropy analysis of six different unprotected secrets (0.99946, 0.99911, 0.99972, 0.99910, 0.99841, and 0.99814) shows values around 0.99 for each, indicating a significant leakage of information, as each bit can only be 0 or 1. 
This low entropy allows attackers to easily infer patterns in the secret sequences via ciphertext side-channel attacks. 
However, after applying Strategy 1 of \tool, the entropy of the secret sequences increases to approximately 9.0, close to the theoretical maximum for 512 random numbers. 
This higher entropy reflects that the secrets become random and unpredictable, making it extremely difficult for attackers to deduce any patterns. 
The scatter distribution of these masked secret sequences in \F~\ref{fig:entropy} further confirms the effectiveness of \tool's masking approach in preventing leakage.


\parh{BN\_constant\_swap.}
Similarly, we illustrate how \tool\ mitigates the vulnerable constant-time swap function \texttt{BN\_constant\_swap} (\F~\ref{fig:channel}(b)) using diversion-based obfuscation. 
This function takes two variables, $a$ and $b$, along with a secret-dependent decision $C$ (e.g., $kbit$ in line 4 of \F~\ref{fig:channel}(a)).
Without mitigation, if $C = 1$, the values of $a$ and $b$ are exchanged, producing observable changes in the ciphertext; if $C = 0$, the ciphertext remains unchanged.

Under Strategy 3 of \tool, isolated secure buffer space is allocated for both $a$ and $b$, which either accepts the original value or a decoy (e.g., a varying nonce). 
In this mechanism, the secure buffer always changes. For the original heap space, it remains unchanged under \textbf{Obfuscation 2}, or is modified under \textbf{Obfuscation 1}, where the original value is transferred to the secure buffer and replaced with a decoy. 
From an attacker’s perspective, multiple locations change unpredictably, preventing any inference about whether $a$ and $b$ undergo a constant-time swap.

\begin{figure}[htbp]
\centering
\includegraphics[width=\linewidth]{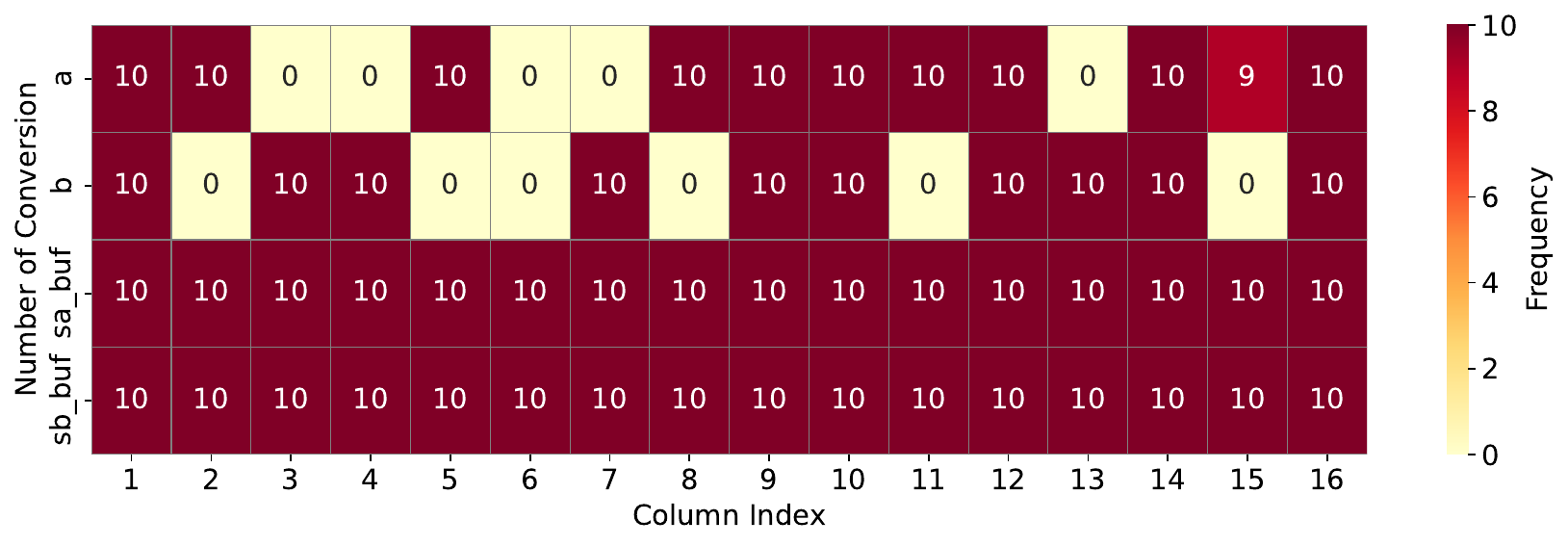}
\caption{Each cell shows the number of value updates over 10 runs for both the original and secure buffer locations. The uniform and patternless distribution demonstrates that the actual storage of swapped variables cannot be inferred by an attacker.}
\label{fig:heatmap}
\end{figure}

We apply Strategy 3 over 10 runs to obfuscate the big number variables in \texttt{BN\_constant\_swap} with varying starting addresses (each constant-time swap involves 512 iterations). 
A heatmap is used to record the number of changes at both the original locations and the secure buffer, as displayed in \F~\ref{fig:heatmap}. 
For illustration, we show the first 16 cells in the figure. 
The heatmap reveals a memory-change pattern that is completely uniform and unpredictable, preventing an attacker from inferring the actual storage locations of the swapped variables.

\section{Discussion}
\label{sec:discussion}

\subsection{Alternative Methods}
\label{subsec:padding}

\ztool~\cite{patschke2025zebrafix} splits plaintext into 8-byte chunks and pads each with an 8-byte write counter using an interleaving method to enhance freshness.
However, this method requires considering the memory structure of the interleaved variables to implement appropriate instrumentation.
\ztool\ leverages variable type information from the compilation stage and employs pointer analysis to determine memory structures for more efficient space interleaving.
For heap memory allocations that lack explicit variable type information, however, it adopts a worst-case layout, storing only one byte of useful data in every 16 bytes, thereby increasing memory overhead by 16$\times$.
Moreover, since \ztool modifies variable memory layouts, space-optimization techniques may become ineffective.
In worse cases, it can even cause incorrect content retrieval when accessing memory via indirect pointers and offsets, thereby impacting both compatibility and correctness.

\parh{Our Memory Consumption.}
The runtime stack memory consumption of \tool remains at the same level as \ftool.
For the primary memory consumption of \tool for heap space, it arises from three dedicated buffers. 
Specifically, a pre-generated nonce buffer (1 KB) is allocated to store a large batch of random nonces in advance. In addition, a 2 MB buffer is reserved for the currently used nonces (excluding potential hash collisions), ensuring efficient access during runtime.
To further assist obfuscation, a secure buffer of 2 MB is allocated as well.

\subsection{Potential Application}
\label{subsec:app}

\parh{Extensions and Compatibility.}
\tool\ offers a flexible interface for integrating support for instructions not yet covered, highlighting its extensibility for future security updates and architectural changes.
\tool\ is designed to be versatile and independent of specific compiler frameworks. 
When using GCC, the LLVM-specific syntax can be replaced with GCC IR while keeping the same taint propagation rules. 
The tainted instructions identified during analysis are targeted for protection within the GCC IR. 
Integrating the mitigation process into GCC's register allocation phase is straightforward, reflecting \tool's design for easy adaptation to different compiler pipelines.

\parh{Impacts on Real-world Applications.}
It is important to contextualize the performance overhead measured in our evaluation. Our analysis focuses on isolated cryptographic primitives, which amplifies the perceived slowdown. In real-world applications, however, these cryptographic operations often constitute only a small fraction of the total execution time. For instance, in a typical TLS handshake, the asymmetric cryptography that requires hardening against ciphertext side-channel attacks is used only during the initial negotiation phase. The bulk of the protocol’s runtime is dedicated to transmitting application data, which does not require such costly protections. To illustrate, cryptographic computations might account for only \texttt{5–10}\% of the total time in a complex application handling network I/O and business logic. Therefore, even a 2$\times$ overhead on the cryptographic portion would translate to a much smaller, and often negligible, impact of about \texttt{5–10}\% on the application’s overall performance. This suggests that the practical impact of \tool\ in real-world deployments, such as TLS servers or secure communication endpoints, would be significantly lower than the overheads reported for the primitives in isolation.

\subsection{Limitations}
\label{subsec:limit}

\parh{Recompilation.}  
Recompilation is inherent to compiler-based protection methods. 
In our case, taint information from the initial compilation is analyzed to guide subsequent repair with a program-wide perspective. 
We observe that the additional compilation overhead in the second pass is negligible.

\parh{Inline Assembly.}
Inline or pure assembly, which is commonly employed in cryptographic software for performance-critical routines, presents a significant challenge for compiler-based taint analysis. 
Unlike high-level code, assembly bypasses many of the abstractions and analysis hooks available to the compiler, making it difficult to track the flow of sensitive data accurately. 
Assembly code usually forms a small, stable portion of cryptographic software (e.g., AES) and can be manually adapted to integrate with the compiler’s taint tracking and mitigation pipeline with minimal maintenance overhead.
While this limitation is intrinsic to all compiler-based approaches, it does not detract from their broader effectiveness. 

\parh{Multi-threading.}  
We do not consider multi-threading in this work. Supporting multi-threaded execution would require additional locking operations, inevitably incurring performance overhead. 
Nevertheless, our design achieves the same level of thread-safety as \ftool, ensuring comparable security guarantees in this aspect.

\parh{Dynamic Taint Analysis.}
\tool\ uses dynamic taint analysis to identify potential sensitive memory access instructions and applies conservatively across multiple secrets and inputs. 
However, it can also lead to unnecessary protection of memory writes that don't actually cause ciphertext side-channel leakage, which adds runtime overhead. 
Reducing this overhead would require a precise, static whole-program taint analysis that can accurately identify true leakage points. 
Unfortunately, this remains a challenge, as current research only allows static analysis of a small portion of the program~\cite{wang2019identifying, brotzman2019casym}.

While \tool\ significantly reduces the exposure to ciphertext side-channel attacks, potential residual risks may include unexecuted paths or subtle metadata leaks.
Overall, while \tool\ provides robust security against ciphertext side channels, absolute security is challenging to guarantee, and the remaining attack surface is reduced to areas that require attacks combining multiple vectors or exploiting vulnerabilities beyond the scope of current mitigation mechanisms.



\section{Conclusions}
\label{sec:conclusion}


\tool\ demonstrates that compiler-based transformations can systematically defend ciphertext side channels in TEEs.
Specifically, \tool\ achieves path-extended coverage beyond binary repair.
Its three strategies eliminate ciphertext, access patterns, and timing leakages while preserving program semantics.
Formal small-step proofs establish semantic equivalence between original and transformed programs, with compiler-issued certificates ensuring atomicity, register integrity, and dummy isolation.
\tool\ thus provides verifiable, high-coverage, and low-overhead protection, offering stronger guarantees than post-compilation masking.

\clearpage
\bibliographystyle{plain}
\bibliography{reference}

\end{document}